\newtheorem{theorem}{Theorem}
\newtheorem{lemma}[theorem]{Lemma}
\newtheorem{corollary}[theorem]{Corollary}
\newtheorem{observation}[theorem]{Observation}
\newcommand{\med}{\ensuremath{\lfloor n/2 \rfloor}}
\newcommand{\medup}{\ensuremath{\lceil n/2 \rceil}}
\newcommand{\vmed}{\ensuremath{v_{\lfloor n/2 \rfloor}}}
\newcommand{\etal}{\emph{et~al.}\xspace}
\DeclareMathOperator{\SPT}{SPT}
\title{Time-Space Trade-off Algorithms for Triangulating a Simple Polygon\footnote{Work on this paper by B.~A.~was supported in part by NSF Grants CCF-11-17336, CCF-12-18791, and CCF-15-40656, and by grant 2014/170 from the US-Israel Binational Science Foundation.
M.~K.~was partially supported by MEXT KAKENHI grant Nos.~12H00855, and 17K12635.
S.~P.~was supported in part by the Ontario Graduate Scholarship and The Natural Sciences and Engineering Research Council of Canada. A.~v.~R. and M.~R. were supported by JST ERATO Grant Number JPMJER1305, Japan.
An earlier version of this work appeared in the \emph{Proceedings of the 15th Scandinavian Symposium and Workshops on Algorithm Theory}~\cite{akprr-tstotsp-16}.}}
\author{%
  Boris Aronov\thanks{%
    Department of Computer Science and Engineering,
    Tandon School of Engineering, New York University,
    Brooklyn, 11201 USA; \texttt{boris.aronov@nyu.edu}.}
  \and
  Matias Korman\thanks{Tohoku University, Sendai, Japan;
    \texttt{mati@dais.is.tohoku.ac.jp}.}
  \and
  Simon Pratt\thanks{Cheriton School of Computer Science, University of Waterloo, Canada; \texttt{Simon.Pratt@uwaterloo.ca}.}
  \and
  Andr\'e~van~Renssen\thanks{%
    National Institute of Informatics (NII), Tokyo, Japan;
    \texttt{\{andre,marcel\}@nii.ac.jp}.}\, $^{,}$\thanks{JST, ERATO, Kawarabayashi Large Graph Project.}
  \and
  Marcel Roeloffzen\footnotemark[5]\, $^{,}$\footnotemark[6]
}
\date{}
\begin{document}

\maketitle

\begin{abstract}
  An $s$-workspace algorithm is an algorithm that has read-only access to the values of the input, write-only access to the output, and only uses $O(s)$ additional words of space. We present a randomized $s$-workspace algorithm for triangulating a simple polygon~$P$ of $n$~vertices that runs in $O(n^2/s+n \log n \log^{5} (n/s))$ expected time using $O(s)$ variables, for any $s \leq n$. In particular, when
  $s \leq \frac{n}{\log n\log^{5}\log n}$
  the algorithm runs in $O(n^2/s)$ expected time.
\end{abstract}

\section{Introduction}

Triangulation of a simple polygon, often used as a preprocessing step in computer graphics, is performed in a wide range of settings including on embedded systems like the Raspberry Pi or mobile phones.  Such systems often run read-only file systems for security reasons and have very limited working memory.  An ideal triangulation algorithm for such an environment would allow for a trade-off in performance in time versus working space.

Computer science and specifically the field of algorithms generally have two optimization goals; running time and memory size. In the 70's there was a strong focus on algorithms that required low memory as it was expensive. As memory became cheaper and more widely available this focus shifted towards optimizing algorithms for their running time, with memory mainly as a secondary constraint. 

Nowadays, even though memory is cheap, there are other constraints that limit memory usage. First, there is a vast number of embedded devices that operate on batteries and have to remain small, which means they simply cannot contain a large memory. Second, some data may be read-only, due to hardware constraints (e.g., read-only or write-once DVDs/CDs) or concurrency issues (i.e., to allow many processes to access the database at once). 

These memory constraints can all be described in a simple way by the so-called \emph{constrained-workspace} model (see Section~\ref{sec:prelim} for details). Our input is read-only and potentially much larger than our working space, and the output we produce must be written to write-only memory. More precisely, we assume we have a read-only data set of size $n$ and a working space of size $O(s)$, for some user-specified parameter $s$. In this model, the aim is to design an algorithm whose running time decreases as $s$ grows. Such algorithms are called \emph{time-space trade-off} algorithms~\cite{s-mcepc-08}.

\subsection*{Previous Work}
Several models of computation that consider space constraints have been studied in the past (we refer the interested reader to~\cite{k-mca-15} for an overview). In the following we discuss the results related to triangulations. The concept of memory-constrained algorithms attracted renewed attention  within the computational geometry community by the work of Asano~\etal~\cite{amrw-cwagp-10}. One of the algorithms presented in~\cite{amrw-cwagp-10} was for triangulating a set of $n$ points in the plane in $O(n^2)$ time using $O(1)$ variables. More recently, Korman~\etal~\cite{kmrrss-tstotvd-15} introduced two different time-space trade-off algorithms for triangulating a point set: the first one computes an arbitrary triangulation in $O(n^2/s + n\log^2 n)$ time using $O(s)$ variables. The second is a randomized algorithm that computes the Delaunay triangulation of the given point set in expected $O((n^2/s)\log s + n\log n\log^*n)$ time within the same space bounds.

The above results address triangulating discrete point sets in the plane. The first algorithm in this model for triangulating simple polygons was due to Asano~\etal~\cite{abbkmrs-mcasp-11} (in fact, the algorithm works for slightly more general inputs: plane straight-line graphs). It runs in $O(n^2)$ time using $O(1)$ variables. The first time-space trade-off for triangulating polygons was provided by Barba~\etal~\cite{bklss-sttosba-14}. In their work, they describe a general time-space trade-off algorithm that in particular could be used to triangulate monotone polygons. An even faster algorithm (still for monotone polygons) was afterwards found by Asano and Kirkpatrick~\cite{ak-tstanlnp-13}: $O(n\log_sn)$ time using $O(s)$ variables. Despite extensive research on the problem, there was no known time-space trade-off algorithm for general simple polygons. It is worth noting that no lower bounds on the time-space trade-off are known for this problem either.

If we forego space constraints, we can triangulate a simple polygon of $n$ vertices in linear time (using linear space)~\cite{c-tsplt-91}. However, this algorithm is considered difficult to implement and very slow in practice (see, e.g.,~\cite[p.~57]{o_rourke_c}). Alternatively, Hertel and Mehlhorn~\cite{hertel_mehlhorn} provided an algorithm that can triangulate a simple polygon of $n$ vertices, $r$ of which are reflex, in~$O(n \log r)$ time. Since our work is of theoretical nature, we will use Chazelle's triangulation algorithm. However, as the running time of our algorithms is dominated by other terms, we can instead use the one of Hertel and Mehlhorn without affecting the asymptotic performance.

\subsection*{Results}

This paper is structured as follows. In Section~\ref{sec:prelim} we define our model, as well as the problems we study. Our main result on triangulating a simple polygon $P$ with $n$ vertices using only a limited amount of memory can be found in Section~\ref{sec:main}. Our algorithm achieves expected running time of $O(n^2/s+n \log n \log^{5} (n/s))$ using $O(s)$ variables, for any $s \leq n$. Note that for most values of $s$ (i.e., when $s \leq \frac{n}{\log n \log^{5}\log n}$) the algorithm runs in $O(n^2/s)$ expected time.

Our approach uses a recent result by Har-Peled~\cite{Har-Peled15} as a tool for subdividing $P$ into smaller pieces and solving them recursively. Once the pieces are small enough to fit into memory, the subproblem can be handed over to the usual algorithm, without memory constraints. This divide-and-conquer approach has been often used in the memory-constrained literature, but each time the partition was constructed \emph{ad~hoc}, based on the properties of the problem being solved.  We believe that the tool we introduce in this paper is very general and can be used for several problems. As an example, in Section~\ref{sec:extensions} we show how the same approach can be used to compute the \emph{shortest-path tree} from any point $p\in P$, or simply to split $P$ by $\Theta(s)$ pairwise non-crossing diagonals into smaller subpolygons, each with $\Theta(n/s)$ vertices. 

\section{Preliminaries}\label{sec:prelim}

In this paper, we utilize the $s$-\emph{workspace} model of computation that is frequently used in the literature (see, for example, \cite{abbkmrs-mcasp-11,bklss-sttosba-14,bkls-cvpufv-13,Har-Peled15}). In this model the input data is given in a read-only array or some similar structure. In our case, the input is a simple polygon $P$; let~$v_1, v_2, \ldots, v_n$ be the vertices of $P$ in clockwise order along its boundary. We assume that, given an index~$i$, in constant time we can access the coordinates of the vertex $v_i$. We also assume that the usual word RAM operations (say, given $i$, $j$, $k$, finding the intersection point of the line passing through vertices $v_i$ and $v_j$ and the horizontal line passing through $v_k$) can be performed in constant time.

In addition to the read-only data, an $s$-workspace algorithm can use $O(s)$ variables during its execution, for some parameter $s$ determined by the user. Implicit memory consumption (such as the stack space needed in recursive algorithms) must be taken into account when determining the size of a workspace. We assume that each variable or pointer is stored in a data word of $\Theta(\log n)$ bits. Thus, equivalently, we can say that an $s$-workspace algorithm uses $O(s\log n)$ bits of storage.

In this model we study the problem of computing a \emph{triangulation} of a simple polygon~$P$, which is a maximal crossing-free straight-line graph whose vertices are the vertices of~$P$ and whose edges lie inside~$P$. Unless $s$ is very large, the triangulation cannot be stored explicitly. Thus, the goal is to report a triangulation of $P$ in a write-only data structure. Once an output value is reported, it cannot be accessed or modified afterwards. 

In other memory-constrained triangulation algorithms~\cite{abbkmrs-mcasp-11,ak-tstanlnp-13} the output is reported as a list of edges in no particular order, with no information on neighboring edges or faces. Moreover, it is not clear how to modify these algorithms to obtain such information. Our approach has the advantage that, in addition to the list of edges, we can also report the triangles generated, together with the adjacency relationship between the edges and the triangles; see Section~\ref{sec:output} for details.

A vertex of a polygon is \emph{reflex} if its interior angle is larger than $180^\circ$.
Given two points $p,q\in P$, the \emph{geodesic} (or \emph{shortest path}) between them is the path of minimum length that connects $p$ and $q$ and that stays within $P$ (viewing $P$ as a closed set). The length of that path is the \emph{geodesic distance} from $p$ to $q$. It is well known that, for any two points of~$P$, their geodesic $\pi$ always exists and is unique. Such a path is a polygonal chain whose vertices (other than $p$ and $q$) are reflex vertices of $P$. Thus, we often identify $\pi$ with the ordered sequence of reflex vertices traversed by the path from $p$ to $q$. When that sequence is empty (i.e., the geodesic consists of the straight segment $pq$) we say that $p$ \emph{sees} $q$ and vice versa.

Our algorithm relies on a recent procedure by Har-Peled~\cite{Har-Peled15} for computing geodesics under memory constraints, which constructs the geodesic between any two points in a simple polygon of $n$ vertices in expected $O(n^2/s+n\log s \log^4 (n/s))$ time using $O(s)$ words of space. Note that this path might not fit in memory, so the edges of the geodesic are reported one by one, in order. 

\section{Algorithm}\label{sec:main}

Let $\pi$ be the
geodesic connecting $v_1$ and $\vmed$. From a high-level perspective,
the algorithm uses the approach of Har-Peled~\cite{Har-Peled15} to
compute $\pi$. We will use the computed edges to subdivide $P$ into smaller problems that can be solved recursively.

We start by introducing some definitions that will help in recording
the portion of the polygon already triangulated. 
Vertices $v_1$ and
$\vmed$ split the boundary of $P$ into two chains. We say $v_i$ is a
\emph{top} vertex if $1 < i < \med$ and a \emph{bottom} vertex if
$\med < i \leq n$. Top/bottom is the \emph{type} of a vertex and all vertices (except for $v_1$ and $\vmed$) have exactly one type. A diagonal $c$ is \emph{alternating} if it
connects a top and a bottom vertex or if one of its endpoints is either $v_1$ or $\vmed$, and \emph{non-alternating} otherwise.

We will use diagonals to partition $P$ into two parts. For simplicity of the exposition, given a diagonal $d$, we regard both components of $P\setminus d$ as closed (i.e., the diagonal belongs to both of them).  Since any two consecutive vertices of $P$ can see each other, the partition produced by an edge of $P$ is trivial, in the sense that one subpolygon is $P$ and the other one is a line segment. 

\begin{observation}\label{obs_split}
Let $c$ be a diagonal of $P$ not incident to $v_1$ or
$\vmed$. Vertices $v_1$ and $\vmed$ belong to different components of
$P \setminus c$ if and only if $c$ is an alternating diagonal.
\end{observation}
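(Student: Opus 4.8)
The plan is to argue via a parity/crossing argument on the boundary circle of $P$. The key observation is that the diagonal $c$ separates $v_1$ from $\vmed$ in $P\setminus c$ exactly when the two endpoints of $c$ lie on "opposite sides" of the pair $\{v_1,\vmed\}$ along the boundary cycle. Since $v_1$ and $\vmed$ split $\partial P$ into the (open) top chain $T = \{v_i : 1 < i < \med\}$ and the (open) bottom chain $B = \{v_i : \med < i \le n\}$, a chord $c$ not incident to $v_1$ or $\vmed$ has its endpoints either both in $T$, both in $B$, or one in each. I will show that $v_1$ and $\vmed$ lie in different components of $P\setminus c$ precisely in the last case, which by the definition given in the paper is exactly when $c$ is an alternating diagonal (note that for a diagonal not incident to $v_1$ or $\vmed$, "alternating" reduces to "connects a top vertex and a bottom vertex").

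**First I would** set up the topological picture: $c$ is a chord of the Jordan region $P$, so $P\setminus c$ has exactly two connected components $P_1, P_2$, each a simple polygon whose boundary consists of $c$ together with one of the two arcs of $\partial P$ determined by the endpoints of $c$. The interior of $c$ contains neither $v_1$ nor $\vmed$ (they are vertices of $P$, not on the open chord, since $c$ is not incident to them and chords meet $\partial P$ only at their endpoints), so each of $v_1,\vmed$ lies in the (relative) interior of exactly one of the two boundary arcs, hence in exactly one component. Thus $v_1$ and $\vmed$ are separated by $c$ if and only if they lie on different arcs of $\partial P\setminus\{$endpoints of $c\}$. Now traverse $\partial P$ cyclically: removing the two endpoints of $c$ cuts the cycle into two arcs; $v_1$ and $\vmed$ fall on different arcs iff the cyclic order around $\partial P$ is (one endpoint of $c$, $v_1$, other endpoint of $c$, $\vmed$) — i.e., the pair $\{v_1,\vmed\}$ interleaves with the pair of endpoints of $c$ on the circle.

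**The main step** is then the elementary combinatorial translation: the cyclic sequence of vertices is $v_1, v_2, \ldots, v_{\med-1}, \vmed, v_{\med+1},\ldots, v_n$, and $v_1,\vmed$ split it into the block $T$ (top vertices) and the block $B$ (bottom vertices). A pair of vertices interleaves with $\{v_1,\vmed\}$ on this cycle if and only if one of them is in $T$ and the other is in $B$. Hence $c$ separates $v_1$ from $\vmed$ iff one endpoint of $c$ is a top vertex and the other is a bottom vertex, which is the definition of alternating for a diagonal avoiding $v_1$ and $\vmed$. I would also note the degenerate remark already made in the text — a diagonal with an endpoint at $v_1$ or $\vmed$ is declared alternating by fiat and is excluded from this statement — so there is no case to check there.

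**I do not expect a serious obstacle here**; the only thing requiring a little care is making the Jordan-curve / chord-decomposition claims precise (that a chord of a simple polygon splits it into exactly two simple subpolygons whose boundaries are as described, and that an interior vertex of $\partial P$ lies in exactly one of them), but these are standard facts about simple polygons and can be invoked directly or proved in one line from the Jordan curve theorem. The combinatorial interleaving argument is then immediate from the definitions of top/bottom vertex and of alternating diagonal, so the whole proof is short.
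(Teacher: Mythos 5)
Your argument is correct, and it matches the reasoning the paper leaves implicit: the statement is given as an unproved observation, and the intended justification is exactly your chord-of-a-Jordan-region argument (a diagonal avoiding $v_1$ and $v_{\lfloor n/2 \rfloor}$ separates them iff its endpoints lie on different boundary arcs, i.e., one is a top and the other a bottom vertex, which is the definition of alternating in this case). Your write-up simply makes the standard topological facts explicit; no gap.
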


\begin{corollary}\label{cor_nonalter}
Let $c$ be a non-alternating diagonal of $P$. The component of
$P \setminus c$ that contains neither $v_1$ nor $\vmed$ has at most
$\medup$ vertices.
\end{corollary}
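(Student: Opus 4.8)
The plan is to derive Corollary~\ref{cor_nonalter} directly from Observation~\ref{obs_split} together with a simple counting argument on how a diagonal partitions the vertices of~$P$ along its boundary. Let $c = v_iv_j$ be a non-alternating diagonal of~$P$, and let $Q$ be the component of $P\setminus c$ containing neither $v_1$ nor $\vmed$; such a component exists precisely because, by Observation~\ref{obs_split}, a non-alternating diagonal leaves $v_1$ and $\vmed$ in the \emph{same} component, so the \emph{other} component $Q$ contains neither of them.

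First I would observe that $Q$ is itself a simple polygon whose boundary consists of the diagonal $c$ together with exactly one of the two boundary arcs of~$P$ between $v_i$ and~$v_j$, namely the arc that avoids $v_1$ and~$\vmed$. Hence the vertices of~$Q$ are exactly $v_i$, $v_j$, and all vertices of~$P$ strictly between them along that arc. The key point is that this arc contains neither $v_1$ nor~$\vmed$, so all of its interior vertices are of a single type: either all top vertices (those with index in $(1,\med)$) or all bottom vertices (those with index in $(\med,n]$). This is where the non-alternating hypothesis does its work, via Observation~\ref{obs_split}.

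Next I would bound the number of such vertices. Since the top vertices are $v_2,\dots,\vmedpred$, there are $\med-2$ of them; the bottom vertices are $\vmednext,\dots,v_n$, of which there are $n-\med$. The arc bounding $Q$ lies entirely within one of these two blocks of consecutive vertices, and its two endpoints $v_i,v_j$ are (since $c$ is not incident to $v_1$ or $\vmed$) themselves among $v_2,\dots,\vmedpred,\vmednext,\dots,v_n$. Thus the number of vertices of $Q$ is at most $2 + \max\{\med - 2,\ n - \med\} - 0$ in the worst case; more carefully, if the arc lies among the top vertices then $Q$ has at most $\med-2 \le \medup$ vertices, and if it lies among the bottom vertices then $Q$ has at most $n-\med \le \medup$ vertices. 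In either case $Q$ has at most $\medup$ vertices, as claimed. (One should double-check the endpoint bookkeeping: the endpoints $v_i,v_j$ are counted among the vertices of the block containing the arc, so no extra ``$+2$'' slack is needed; the worst case is simply the full block of bottom vertices, which has $n - \med = \medup$ elements.)

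The main obstacle is the off-by-one bookkeeping in these counts: one must be careful about whether endpoints are included, whether $n$ is even or odd, and that $n - \med = \medup$ and $\med - 2 \le \medup$ hold in all cases. None of this is conceptually deep, but it is the only place an error could creep in, so I would state the two cases (top-arc, bottom-arc) explicitly and verify the inequality $n-\med=\medup$ and $\med-1\le\medup$ for both parities of~$n$ before concluding.
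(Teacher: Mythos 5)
Your proof is correct and follows essentially the argument the paper intends (the paper leaves Corollary~\ref{cor_nonalter} unproved as an immediate consequence of Observation~\ref{obs_split}): the component avoiding $v_1$ and $\vmed$ is bounded by $c$ and a boundary arc consisting of vertices of a single type, and each type has at most \medup{} vertices. Your endpoint bookkeeping ($\med-2$ top vertices, $n-\med=\medup$ bottom vertices, endpoints of $c$ already counted in the block) is accurate, so no gap remains.
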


We will use alternating diagonals as a way to remember what part of the polygon has already been triangulated.  More specifically, the algorithm will at all times store an alternating diagonal $a_c$. An invariant of our algorithm is that the connected component of $P\setminus a_c$ not containing $\vmed$ has already been triangulated. 

Ideally, $a_c$ would be an edge of $\pi$, the geodesic connecting $v_1$ and $\vmed$, but this is not always possible. Instead, we guarantee that at least one of the endpoints of $a_c$ is a vertex of $\pi$ that has already been computed in the execution of the shortest-path algorithm.

With these definitions in place, we can give an intuitive description of our algorithm. We start by setting $a_c$ as the degenerate diagonal from $v_1$ to $v_1$. We then use the shortest-path computation procedure of Har-Peled. Our aim is to walk along $\pi$ until we find a new alternating diagonal $a_{\textrm{new}}$. At that moment we pause the execution of the shortest-path algorithm, triangulate the subpolygons of $P$ that have been created (and contain neither $v_1$ nor $\vmed$) recursively, set $a_c$ to $a_{\textrm{new}}$, and resume the execution of the shortest-path algorithm. 

Although our approach is intuitively simple, there are several technical difficulties that must be carefully considered. Ideally, the number of vertices we walk along $\pi$ before finding an alternating diagonal is small and thus they can be stored explicitly.  But if we do not find an alternating diagonal on $\pi$ in just a few steps (indeed, $\pi$ may contain no alternating diagonal), we need to use other diagonals. We also need to make sure that the complexity of each recursive subproblem is reduced by a constant fraction, that we never exceed space bounds, and that no part of the triangulation is reported more than once.

Let $v_c$ denote the endpoint of $a_c$ that is on $\pi$ and that is closest to $\vmed$. Recall that
the subpolygon defined by $a_c$ containing $v_1$ has already been triangulated. Let $w_0, \ldots , w_k$ be the portion of $\pi$ up to the next alternating diagonal. That is, path $\pi$ is of the form $\pi=(v_1, \ldots, v_c=w_0, w_1, \ldots, w_{k}, \ldots, \vmed)$ where $w_1, \ldots, w_{k-1}$ are of the same type as $v_c$, and $w_k$ is of different type (or $w_k=\vmed$ if all vertices between $v_c$ and $\vmed$ are of the same type). 

Consider the partition of $P$ induced by $a_c$ and this portion of $\pi$; see Figure~\ref{fig:ChainBetweenAlternationDiagonals2}. Let $P_1$ be the subpolygon induced by $a_c$ that does not contain $\vmed$. Similarly, let $P_{\med}$ be the subpolygon that is induced by the alternating diagonal $w_{k-1}w_k$ and does not contain $v_1$.\footnote{For simplicity of the exposition, the definition of $P_1$ assumes that $\vmed$ is not an endpoint of $a_c$ (similarly, $v_1$ not an endpoint of $w_{k-1}w_k$ in the definition of $P_{\med}$). Each of these conditions is not satisfied once (i.e., at the first and last diagonals of $\pi$), and in those cases the polygons $P_1$ and $P_{\med}$ are not properly defined. Whenever this happens we have $k=1$ and a single diagonal that splits $P$ in two. Thus, if $\vmed\in a_c$ (and thus $P_1$ is undefined), we simply define $P_1$ as the complement $P_{\med}$ (similarly, if $v_1\in w_{k-1}w_k$, we define $P_{\med}$ as complement of $P_1$). If both subpolygons are undefined simultaneously we assign them arbitrarily.} For any $i<k-1$, we define $Q_i$ as the subpolygon induced by the non-alternating diagonal~$w_iw_{i+1}$ that contains neither $v_1$ nor $\vmed$. Finally, let $R$ be the remaining component of $P$.  Some of these subpolygons may be degenerate and consist only of a line segment (for example, when $w_iw_{i+1}$ is an edge of $P$).

\begin{figure}
  \centering
  \includegraphics{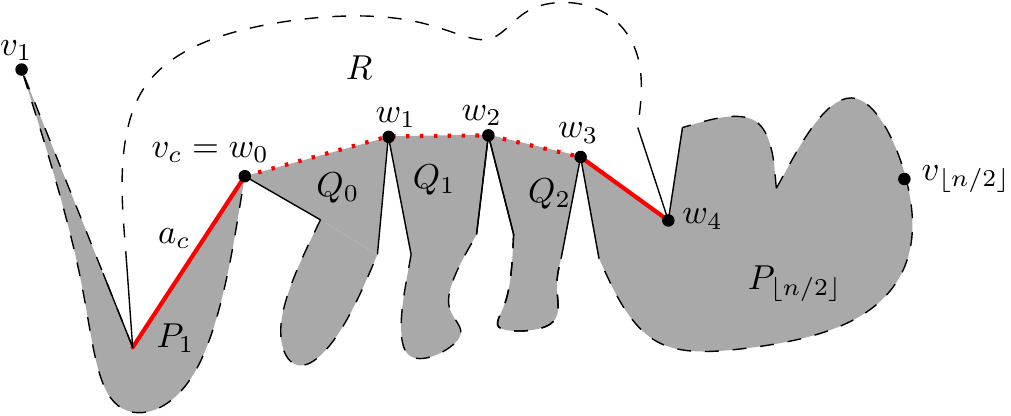}
  \caption{Partitioning $P$ into subpolygons $P_1$, $P_{\med}$, $R$, $Q_1$, $\ldots$, $Q_{k-2}$. The two alternating diagonals are marked by thick red lines.}
  \label{fig:ChainBetweenAlternationDiagonals2}
\end{figure}

\begin{lemma}
\label{lem:closeAlternation}
Each of the subpolygons $R$, $Q_1$, $Q_2$, $\ldots$, $Q_{k-2}$ has at most $\medup+k$ vertices. Moreover, if $w_k=\vmed$, then the subpolygon $P_{\med}$ has at most $\medup$ vertices. 
\end{lemma}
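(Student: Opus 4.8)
The plan is to analyze each subpolygon by decomposing its boundary into pieces coming from the boundary of $P$ and pieces coming from the geodesic chain $w_0, w_1, \dots, w_k$, and then to bound the number of $P$-vertices on each piece using the type structure (top/bottom) of the vertices together with Observation~\ref{obs_split} and Corollary~\ref{cor_nonalter}. First I would handle the polygons $Q_i$ for $1 \le i \le k-2$. By definition $Q_i$ is the component of $P \setminus w_i w_{i+1}$ containing neither $v_1$ nor $\vmed$, and $w_i w_{i+1}$ is a non-alternating diagonal, so Corollary~\ref{cor_nonalter} immediately gives that $Q_i$ has at most $\medup$ vertices — in fact better than the claimed $\medup + k$, so these cases are easy. (If $w_i w_{i+1}$ is an edge of $P$ then $Q_i$ is degenerate and the bound is trivial.)

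The main work is bounding $R$. The boundary of $R$ consists of the full geodesic sub-chain from $w_1$ to $w_{k-1}$ (the diagonals $w_1 w_2, \dots, w_{k-2} w_{k-1}$, which together with the two ``cap'' diagonals $a_c$ and $w_{k-1} w_k$ bound $R$) plus two arcs of $\partial P$: one arc ``near $v_1$'' running between the endpoint of $a_c$ not on $\pi$ and $w_1$, and one arc ``near $\vmed$'' running between $w_{k-1}$ and the endpoint of $w_{k-1}w_k$ not on $\pi$. The vertices $w_1, \dots, w_{k-1}$ contribute at most $k-1$ vertices of $P$. For the two boundary arcs, the key observation is that $w_1, \dots, w_{k-1}$ all have the same type, say top, so the portion of $\partial P$ they ``cut off'' together with $a_c$ and $w_{k-1}w_k$ lies on one side: I would argue that the union of the two $\partial P$-arcs appearing on $\partial R$, together with the two chords, is exactly the complement (in $\partial P$) of the arc cut off on the $\vmed$-side, hence consists entirely of bottom vertices plus possibly $v_1$ and $\vmed$; since there are at most $\medup$ bottom vertices plus $v_1$ itself, this arc contributes $O(\medup)$. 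Combining, $R$ has at most $\medup + k$ vertices. The case of $P_{\med}$ when $w_k = \vmed$ is the cleanest: then $w_{k-1}w_k$ is the alternating diagonal $w_{k-1}\vmed$, and $P_{\med}$ is the component not containing $v_1$; since $w_{k-1}$ has the same type as all the $w_i$ (say top) and is not $v_1$ or $\vmed$, one applies Observation~\ref{obs_split}-style reasoning (or directly Corollary~\ref{cor_nonalter} after noting $P_{\med}$ lies strictly on one side) to conclude $P_{\med}$ has at most $\medup$ vertices, because its boundary arc on $\partial P$ avoids $v_1$.

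The step I expect to be the real obstacle is getting the bookkeeping exactly right for $R$, in particular verifying that the two $\partial P$-arcs on $\partial R$ do not overlap the arc cut off on the $\vmed$-side and do not double-count shared endpoints like $w_1$, $w_{k-1}$, or $v_1$; one has to be careful that $a_c$ has one endpoint $v_c = w_0$ on $\pi$ and one endpoint off $\pi$ of the opposite (bottom) type, so that the ``near-$v_1$'' arc of $\partial R$ genuinely consists of bottom vertices (plus at most the two special vertices), and similarly for the ``near-$\vmed$'' side. I would also treat the degenerate endpoint cases flagged in the footnote (where $\vmed \in a_c$ or $v_1 \in w_{k-1}w_k$, forcing $k=1$) separately and observe that then there are no $Q_i$'s and $R$ is one of the two halves of a single-diagonal split, so the bound holds a fortiori. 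Once the arc-decomposition is pinned down, the counting itself is routine: each relevant arc contributes at most $\medup$ $P$-vertices and the geodesic contributes at most $k$, giving $\medup + k$.
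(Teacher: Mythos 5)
Your treatment of the $Q_i$'s and of $P_{\med}$ is essentially the paper's (for $P_{\med}$, note that Corollary~\ref{cor_nonalter} cannot be applied ``directly,'' since $w_{k-1}\vmed$ is alternating; but your stated reason --- the boundary arc of $P_{\med}$ avoids $v_1$, hence carries vertices of one type only --- is exactly the paper's argument). The problem is in the main part, the bound for $R$, where your description of $\partial R$ is structurally wrong. First, both endpoints of the cap diagonal $w_{k-1}w_k$ lie on $\pi$ (they are consecutive geodesic vertices), so ``the endpoint of $w_{k-1}w_k$ not on $\pi$'' does not exist; you appear to be importing the picture from the long-walk case of Lemma~\ref{lem:farAlternation}, where the cap is $uw_\tau$ with $u$ off the geodesic. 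Second, $\partial R$ contains only \emph{one} arc of $\partial P$, not two: going around $R$ one traverses $a_c$ from $u'$ to $w_0$, then the chords $w_0w_1,\dots,w_{k-1}w_k$ (each either an edge of $P$ or the diagonal cutting off a pocket $Q_i$, resp.\ $P_{\med}$), and then a single contiguous arc of $\partial P$ from $w_k$ back to $u'$. In particular there is no $\partial P$-arc between $u'$ and $w_1$ (that stretch of $\partial R$ is $a_c$ followed by the chord $w_0w_1$), and your chain omits $w_0$ altogether. Third, the justification you offer for the arcs being all of one type --- that their union ``is exactly the complement of the arc cut off on the $\vmed$-side'' --- is false: that complement also contains the arcs bounding the pockets $Q_i$ (which consist of top vertices) and the arc bounding $P_1$ through $v_1$.

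The correct argument, which is the paper's one-line proof made explicit, is: the single $\partial P$-arc of $\partial R$ runs between $w_k$ and $u'$, both of which are bottom vertices (or possibly $v_1$, $\vmed$ in the degenerate endpoint cases), and it cannot contain $v_1$ or $\vmed$ in its interior, since those vertices lie in $P_1$ and $P_{\med}$ respectively, on the far sides of $a_c$ and $w_{k-1}w_k$; hence the arc stays within the bottom chain and contributes at most $\medup$ vertices, while the path $w_0,\dots,w_k$ contributes the remaining at most $k+1$. As written, your key counting step for $R$ rests on an incorrect decomposition of $\partial R$, so the proof does not go through without this repair, even though the counting idea (one single-type arc plus at most $k$ geodesic vertices) is the right one.
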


\begin{proof}
Subpolygons $Q_i$ are induced by non-alternating diagonals and cannot have more than $\medup$ vertices, by Corollary~\ref{cor_nonalter}. The proof for $R$ follows by definition: the boundary of $R$ comprises the path $w_0 \dots w_k$ and a contiguous portion of $P$ consisting of only top vertices or only bottom vertices. Recall that there are at most $\medup$ of each type. Similarly, if $w_k=\vmed$, subpolygon $P_{\med}$ can only have vertices of one type (either only top or only bottom vertices), and thus the bound holds. This completes the proof of the Lemma.
\end{proof}

This result allows us to treat the easy case of our algorithm. When $k$ is small (say, a constant), we can pause the shortest-path computation, explicitly store all vertices $w_i$, recursively triangulate $R$ as well as the subpolygons $Q_i$ (for all $i\leq k-2$), update $a_c$ to the edge $w_{k-1}w_k$, and resume the shortest-path algorithm. 

Handling the case of large $k$ is more involved. Note that we do not know the value of~$k$ until we find the next alternating diagonal, but we need not compute it directly.  Given a parameter $\tau$ related to the workspace allowed for our algorithm, we say that the path is \emph{long} when $k>\tau$. Initially we set $\tau=s$ but the value of this parameter will change as we descend the recursion tree.  We say that the distance between two alternating diagonals is \emph{long} whenever we have computed $\tau$ vertices of $\pi$ beyond $v_c$ and they are all of the same type as $v_c$. That is, path $\pi$ is of the form $\pi=(v_1, \ldots, v_c=w_0, w_1, \ldots, w_{\tau}, \ldots \vmed)$ and vertices $w_0, w_1, \ldots w_{\tau}$ have the same type and, in particular, form a convex chain (see Figure~\ref{fig:ChainBetweenAlternationDiagonals2}). Rather than continue walking along $\pi$, we look for a vertex~$u$ of~$P$ that together with $w_{\tau}$ forms an alternating diagonal. Once we have found this diagonal, we have at most $\tau+2$ diagonals ($a_c, w_0w_1, w_1w_2, \ldots, w_{\tau-1}w_{\tau}$, and $uw_{\tau}$) partitioning $P$ into at most $\tau+3$ subpolygons once again: $P_1$ is the part induced by $a_c$ which does not contain~$\vmed$, $P_{\med}$ is the part induced by $uw_{\tau}$ which does not contain~$v_1$, $Q_i$ is the part induced by $w_iw_{i+1}$, which contains neither $v_1$ nor $\vmed$, and $R$ is the remaining component.
 
\begin{lemma}
\label{lem:farAlternation}
We can find a vertex $u$ so that $uw_{\tau}$ is an alternating diagonal, in $O(n)$ time using $O(1)$ space. Moreover, each of the subpolygons $R$, $Q_1$, $Q_2$, $\ldots$, $Q_{\tau-2}$ has at most $\medup+\tau$ vertices.
\end{lemma}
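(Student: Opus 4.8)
The plan is to treat the two claims separately: the vertex-count bound, and the construction of $u$.

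The count is the routine half. Each $Q_i$ is the component of $P$ cut off by the non-alternating diagonal $w_iw_{i+1}$ that avoids $v_1$ and $\vmed$, so Corollary~\ref{cor_nonalter} already gives $|Q_i|\le\medup\le\medup+\tau$. For $R$, assume w.l.o.g. that $w_0,\dots,w_\tau$ are top vertices. Tracing $\partial R$, one meets $a_c$, then the convex chain $w_0w_1,w_1w_2,\dots,w_{\tau-1}w_\tau$, then $uw_\tau$, and finally a single contiguous arc $\alpha$ of $\partial P$ joining the non-$\pi$ endpoint of $a_c$ to $u$. Since $R$ contains neither $v_1$ nor $\vmed$, and deleting $v_1$ and $\vmed$ from $\partial P$ leaves exactly the top chain $v_2,\dots,\vmedpred$ and the bottom chain $\vmednext,\dots,v_n$, the arc $\alpha$ lies inside one of these two chains; as $a_c$ is alternating and $v_c$ is a top vertex, the other endpoint of $a_c$ is a bottom vertex (or $v_1$/$\vmed$), so $\alpha$ lies in the bottom chain and contributes at most $\medup$ vertices. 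The vertices of $R$ not on $\alpha$ are exactly $w_0=v_c,w_1,\dots,w_\tau$, i.e.\ $\tau+1$ of them; a careful accounting (observing that $\alpha$ is a proper sub-arc of its chain, or that $v_c$ coincides with an endpoint of $a_c$ rather than being an ``internal'' chain vertex) brings this to $\tau$ extra vertices, for the stated total of $\medup+\tau$. I would reconcile the exact constant with Figure~\ref{fig:ChainBetweenAlternationDiagonals2}; at worst an additive $O(1)$ slack appears here, which is harmless downstream, but I expect the bound as written to be correct.

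The construction of $u$ is the real obstacle, and I would split it into a geometric existence claim and a bounded-memory search. Geometrically, I need that $w_\tau$ sees a vertex forming an alternating diagonal with it --- a bottom vertex, $v_1$, or $\vmed$. The intuition is that since $w_0,\dots,w_\tau$ are consecutive geodesic vertices all of one type, $\pi$ hugs the convex chain they span from the side of the top chain, and no bottom vertex can lie close to that chain (otherwise the geodesic would have bent around it and it would itself be one of the $w_i$); hence the side of $w_\tau$ facing away from the top chain opens into a region of $P$ whose boundary, apart from the chain itself, is made up only of bottom-chain edges together with $a_c$ (whose far endpoint is a bottom vertex, or $v_1$/$\vmed$), so the nearest vertex of $P$ seen from $w_\tau$ in that direction is of the right type and $uw_\tau$ is alternating. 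Turning this intuition into a proof --- isolating the exact subpolygon and the exact reflex wedge at $w_\tau$ that faces into it --- is where I expect the effort to go; the cleanest formalization I see is to work inside the subpolygon bounded by $a_c$, the chain, and the bottom-chain portion of $\partial P$, in which $w_\tau$ is a reflex vertex whose interior wedge lies entirely within that subpolygon.

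Algorithmically, with the geometry settled, $u$ is produced by a single bounded-memory ray shoot and rotation. Shoot a ray from $w_\tau$ into the reflex wedge above and compute its first intersection $p$ with $\partial P$ by scanning all $O(n)$ edges of $P$ while keeping only the nearest hit. If $p$ lies on $a_c$, return the non-$\pi$ endpoint of $a_c$ as $u$. Otherwise $p$ lies on a bottom-chain edge $v_mv_{m+1}$; rotate the ray about $w_\tau$ toward $v_m$ until it first passes through a vertex of $P$ --- located by one more $O(n)$ scan --- and return that vertex. Each step stores only a constant number of coordinates and indices, so the whole procedure runs in $O(n)$ time and $O(1)$ space, which together with the count above establishes the lemma.
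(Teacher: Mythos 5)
Your size bounds are fine and match the paper's treatment (the paper simply defers to Lemma~\ref{lem:closeAlternation} and Corollary~\ref{cor_nonalter}; the small additive slack you flag is harmless). The genuine gap is exactly where you predicted the effort would go: you never establish that the vertex your procedure returns is of the opposite type (or $v_1$/$\vmed$), i.e., that the diagonal produced is \emph{alternating}. Your argument rests on the claim that the region on the far side of the convex chain is bounded, apart from the chain and $a_c$, only by bottom-chain edges. That claim is unjustified and false in general: with your convention ($w_i$ top), the untriangulated region beyond the pockets has $\vmed$ on its boundary and is therefore also bounded by the portion of the \emph{top} chain between $w_\tau$ and $\vmed$, so the first edge hit by your ray, and the vertex found by the subsequent rotation, may well be a top vertex, in which case your output is a non-alternating diagonal. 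The intuition you offer (``no bottom vertex can lie close to the chain, else the geodesic would have bent around it'') rules out the wrong kind of vertex --- a nearby bottom vertex would be a \emph{good} candidate for $u$; what must be excluded is returning a top vertex, and nothing in your argument does that. Your proposed formalization (``the subpolygon bounded by $a_c$, the chain, and the bottom-chain portion of $\partial P$'') is not a closed subpolygon without including precisely the top-chain portion whose absence you are asserting.

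The paper closes this gap with a geodesic-optimality argument that has no counterpart in your proposal. It fixes the ray direction (from $w_\tau$ towards $w_{\tau-1}$, or towards the other endpoint $u'$ of $a_c$ if $u'$ lies on the far side of the line $\ell$ through $w_{\tau-1}w_\tau$), notes that the candidate vertex --- the endpoint $p_N$ of the first crossed edge lying on or above $\ell$, or, if $p_N$ is not visible, the visible reflex vertex in the triangle maximizing the angle at $w_\tau$ (Lemma~1 of \cite{bkls-cvpufv-13}) --- lies on or above $\ell$, and then argues that if this vertex had the same type as the $w_i$, the geodesic $\pi$ would have to pass through it; but a shortest path from $w_{\tau-1}$ to a point on or above $\ell$ does not detour through $w_\tau$ (the chain is convex and lies on the other side of $\ell$), contradicting optimality of $\pi$. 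Your implementation details (constant number of $O(n)$ scans, $O(1)$ space, and the ``rotate until the first vertex'' step, which is the same visible-vertex trick the paper uses) are consistent with the paper, but without a type argument of this kind the central claim of the lemma remains unproven.
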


\begin{figure}
  \centering
  \includegraphics[width=0.45\textwidth]{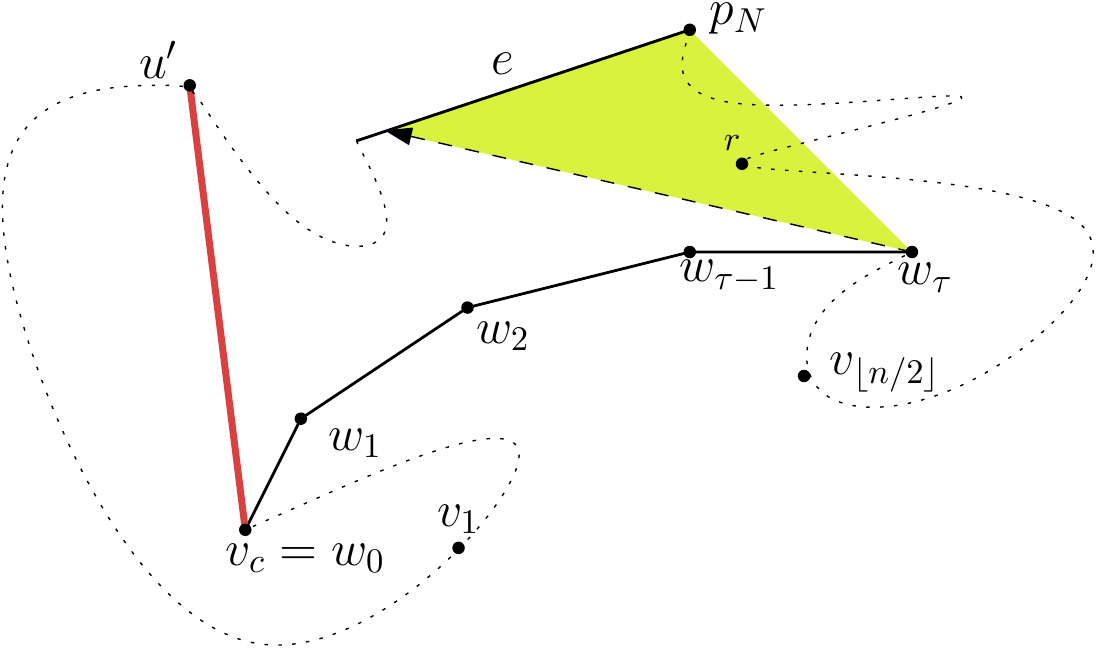}\qquad%
  \includegraphics[width=0.45\textwidth]{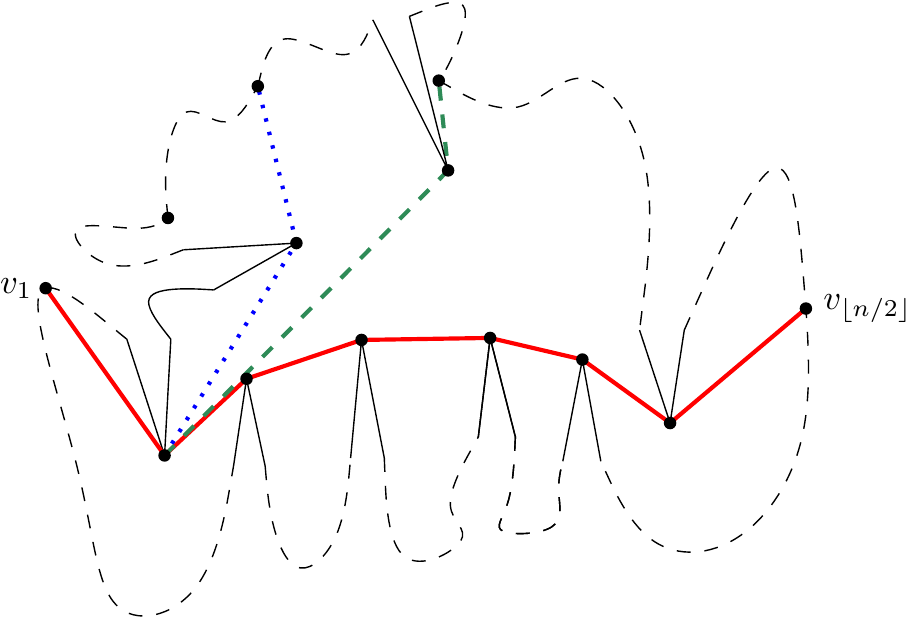}
  \caption{(left)~After we have walked $\tau$ steps of $\pi$, we can find an alternating diagonal by shooting a ray from $w_{\tau}$ either towards $u'$ or $w_{\tau-1}$ (whichever is higher). If $p_N$ is not visible (as it happens in this example), there is a reflex vertex $r$ within the triangular region that is visible from $w_r$. The upper endpoint $p_N$ of the first edge $e$ hit might not be visible. But then the reflex vertex of largest angle inside the triangular zone must be visible to $w_\tau$.
  (right)~At a different level of recursion the subproblems are formed by a consecutive chain of the input and a list of $O(\tau)$ cut vertices. The geodesics used to split the problem at first, second, and third level are depicted in solid red, dashed green, and dotted blue, respectively.}
  \label{fig:rayshooting}
\end{figure}

\begin{proof}
Proofs for the size of the subpolygons are identical to those of Lemma~\ref{lem:closeAlternation}. Thus, we focus on how to compute $u$ efficiently. 
Without loss of generality, we may assume that the edge $w_{\tau-1}w_{\tau}$ is horizontal. Recall that the chain $w_0, \ldots, w_{\tau}$ is in convex position, thus all of these vertices must lie on one side of the line $\ell$  through $w_{\tau-1}$ and $w_{\tau}$, say below it.
Let $u'$ be the endpoint of $a_c$ other than $v_c$. If $u'$ also lies below $\ell$, we shoot a ray from $w_{\tau}$ towards~$w_{\tau-1}$. Otherwise, we shoot a ray from $w_{\tau}$ towards $u'$. Let $e$ be the first edge that is properly intersected by the ray and let $p_N$ be the endpoint of $e$ of highest $y$-coordinate. Observe that $p_N$ must be on or above $\ell$; see Figure~\ref{fig:rayshooting}~(left).

Ideally, we would like to report $p_N$ as the vertex $u$. However, point $p_N$ need not be visible even though some portion of $e$ is. Whenever this happens, we can use the visibility properties of simple polygons: since $e$ is partially visible, the portion of $P$ that obstructs visibility between $w_{\tau}$ and $p_N$ must cross the segment from $w_{\tau}$ to $p_N$. In particular, there must be one or more reflex vertices in the triangle formed by $w_{\tau}$, $p_N$, and the visible point of $e$ (shaded region of Figure~\ref{fig:rayshooting}~(left)). Among those vertices, the vertex $r$ that maximizes the angle $\angle p_Nw_{\tau}r$ must be visible from $w_\tau$ (see Lemma~1 of \cite{bkls-cvpufv-13}). 

We claim that $r$ must be a top vertex: otherwise $\pi$ would need to pass through $r$ to reach $\vmed$, but since $r$ is above $\ell$, the shortest path from $w_{\tau-1}$ to $r$ does not go through~$w_\tau$.  This means that $\pi$ could be made shorter by taking the shortest path from $w_{\tau-1}$ to $r$ instead of going through $w_\tau$.
This contradicts $\pi$ being the shortest path between $v_1$ and $\vmed$, and thus we conclude that $r$ is a top vertex, as claimed. 

As described in Lemma~1 of \cite{bkls-cvpufv-13}, in order to find such a reflex vertex we need to scan~$P$ at most three times, each time storing a constant amount of information: once for finding the edge $e$ and point $p_N$, once more to determine if $p_N$ is visible, and a third time to find $r$ if $p_N$ is not visible.
\end{proof}

At a high level, our algorithm walks from $v_1$ to $\vmed$, stopping after walking $\tau$ steps or after finding an alternating diagonal, whichever comes first. This generates several subproblems of smaller complexity that are solved recursively. Once the recursion is done we update $a_c$ (to keep track of the portion of $P$ that has been triangulated), and continue walking along $\pi$. The walking process ends when it reaches $\vmed$. In this case, in addition to triangulating $R$ and the  subpolygons $Q_i$ as usual, we must also triangulate $P_{\med}$. 

The algorithm at the deeper levels of recursion is almost identical. There are only three minor changes that need to be introduced. We need some base cases to end the recursion. Recall that $\tau$ denotes the amount of space available to the current instance of the problem. Thus, if $\tau$ is comparable to $n$ (say, $10\tau \geq n$), then the whole polygon fits into memory and can be triangulated in linear time~\cite{c-tsplt-91}. Similarly, if $\tau$ is small (say, $\tau \leq 10$), we have run out of space and thus we triangulate $P$ using a constant-workspace algorithm~\cite{abbkmrs-mcasp-11}. In all other cases we continue with the recursive algorithm as usual. 

For ease in handling the subproblems, at each step we also designate the vertex that fulfills the role of $v_1$ (i.e., one of the vertices from which the geodesic must be computed). Recall that we have random access to the vertices of the input. Thus, once we know which vertex plays the role of $v_1$, we can find the vertex that plays the role of $\vmed$ in constant time as well.

\begin{algorithm}
  \renewcommand{\algorithmiccomment}[1]{(* #1 *)}
  \begin{algorithmic}[1]  
	\IF[The polygon fits into memory.]{$10\tau \geq n$}
			\STATE  Triangulate $P$ using Chazelle's algorithm~\cite{c-tsplt-91}
	\ELSIF[We ran out of recursion space.]{$\tau \leq 10$}
			\STATE Triangulate $P$ using the constant workspace algorithm~\cite{abbkmrs-mcasp-11}
	\ELSE[$P$ is large, we will use recursion.]
	\STATE $a_c \leftarrow v_1v_1$
	\STATE $v_c \leftarrow v_1$	
	\STATE walked $\leftarrow v_1$ \COMMENT{Variable to keep track of how far we have walked on $\pi$.}
	\WHILE{walked $\neq \vmed$} 
		\STATE $i \leftarrow 0$ (* $i$ counts the number of steps before finding an alternating diagonal *)
		\REPEAT 
			\STATE $i \leftarrow i+1$
			\STATE $w_i \leftarrow $ next vertex of $\pi$ 
		\UNTIL{$i=\tau$ \OR $w_{i-1}w_i$ is an alternating diagonal}
		\IF{$w_{i-1}w_i$ is an alternating diagonal}
			\STATE $u' \leftarrow w_{i-1}$
			\STATE $a_{\textrm{new}} \leftarrow w_iw_{i-1}$
		\ELSE[We walked too much. Use Lemma~\ref{lem:farAlternation} to partition the problem.]
			\STATE $u' \leftarrow$ \textsc{FindAlternatingDiagonal}$(P, a_c, v_c, w_1, \ldots, w_{\tau})$
			\STATE $a_{\textrm{new}} \leftarrow u'w_i$
		\ENDIF
		\STATE \COMMENT{Now we triangulate the subpolygons.}
		\STATE Triangulate$(R,u',\tau \cdot \kappa)$ 
		\FOR{$j=0$ \TO $i-2$}
		\STATE Triangulate$(Q_j,w_j,\tau \cdot \kappa)$
		\ENDFOR
		\STATE $a_c \leftarrow a_{\textrm{new}}$
		\STATE $v_c \leftarrow w_{\tau}$	
		\STATE walked $\leftarrow w_{\tau}$		
	\ENDWHILE
	\STATE \COMMENT{We reached $\vmed$.  All parts except $P_{\med}$ have been triangulated.}
	\STATE  Triangulate$(P_{\med},w_i,\tau\cdot \kappa)$
	\ENDIF

  \end{algorithmic}
\caption{Pseudocode for Triangulate$(P,v_1,\tau)$ that, given a simple polygon~$P$ with $n$~vertices, a vertex $v_1$ of $P$, and workspace capacity $\tau$, computes a triangulation of $P$ in $O(n^2/\tau+n \log n \log^{5} (n/\tau))$ expected time using $O(\tau)$ variables.}
\label{algo_main}
\end{algorithm}

In order to avoid exceeding the space bounds, at each level of the recursion we decrease the value of $\tau$ by a factor of $\kappa<1$. The exact value of the constant $\kappa$ will be determined below. Pseudocode of the recursive algorithm can be found in Algorithm~\ref{algo_main}. Although not explicitly defined in pseudocode, procedure $ \textsc{FindAlternatingDiagonal}$ computes an alternating diagonal as described in Lemma~\ref{lem:farAlternation}.

\begin{theorem}\label{main_theo}
Let $P$ be a simple polygon of $n$ vertices. For any $s \leq n$ we can compute a
triangulation of~$P$ in $O(n^2/s+n \log s \log^{5} (n/s))$ expected time using $O(s)$
variables. In~particular, when $s \leq\frac{n}{\log n\log^{5}\log n}$ the algorithm runs in $O(n^2/s)$ expected time.
\end{theorem}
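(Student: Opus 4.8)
The plan is to prove the statement by induction on the recursion, establishing three things: the output is a valid triangulation with no triangle reported twice, the total space is $O(s)$, and the expected running time matches the bound. Concretely, I would prove the more general claim that Triangulate$(P,v_1,\tau)$ uses $O(\tau)$ space and expected time $O(m^{2}/\tau+m\log\tau\log^{5}(m/\tau))$ on an $m$-vertex polygon, and then take $m=n$, $\tau=s$. Since Har-Peled's procedure is randomized only in its running time and outputs the (unique) geodesic, the recursion tree and the partition it induces are deterministic, so it suffices to sum expected per-node costs. For correctness the key is the invariant from Section~\ref{sec:main}: at all times the component of $P\setminus a_c$ not containing $\vmed$ has already been reported as a triangulation. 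Each advance of $a_c$ --- whether to a near alternating diagonal detected in the REPEAT loop or to $u'w_i$ from \textsc{FindAlternatingDiagonal} (Lemma~\ref{lem:farAlternation}) --- exposes a region that, by Observation~\ref{obs_split} and Corollary~\ref{cor_nonalter}, is exactly the disjoint union of $R$ and the $Q_j$; recursing triangulates each exactly once, and since their interiors are pairwise disjoint and disjoint from the already-reported part, nothing is double-reported. When the walk reaches $\vmed$ only $P_{\med}$ remains, and it is triangulated last; the degenerate cases are precisely those isolated in the footnote. The two base cases are correct by the cited algorithms.

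\textbf{Space.} Along any root-to-leaf path the workspace parameter is multiplied by the constant $\kappa<1$, so at depth $d$ it equals $\kappa^{d}s$, and the call at that depth stores $O(\kappa^{d}s)$ words: the at most $\tau$ vertices $w_i$, plus the paused states of Har-Peled's procedure and of \textsc{FindAlternatingDiagonal}, each $O(\kappa^d s)$ by their stated bounds and Lemma~\ref{lem:farAlternation}. Summing this geometric series over the recursion stack gives $O(s)$. The recursion terminates because $\kappa^{d}s$ falls below $10$ after $O(\log s)$ levels and because each recursed subpolygon has at most $\medup+\tau < 0.6m+O(1)$ vertices (Lemmas~\ref{lem:closeAlternation} and~\ref{lem:farAlternation}, using that we recurse only when $10\tau<m$), so the base case $10\tau\ge m$ is reached after $O(\log(n/s))$ levels; hence the depth is $O(\log n)$, and more precisely $O(\log s)$ when $s\le\sqrt n$ and $O(\log(n/s))$ when $s>\sqrt n$.

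\textbf{Time.} Excluding recursive calls, one call costs, in expectation, $O(m^{2}/\tau+m\log\tau\log^{4}(m/\tau))$ for running Har-Peled's procedure along all of $\pi$; the at most $m/\tau$ calls to \textsc{FindAlternatingDiagonal} add $O(m\cdot m/\tau)=O(m^{2}/\tau)$, and the loop bookkeeping is $O(|\pi|)=O(m)$. The crucial structural fact for summing over the tree is that the cut diagonals produced over all recursion levels combined are pairwise non-crossing chords of $P$ (each recursive call inserts diagonals only in the interior of its own subpolygon), so there are at most $n-3$ of them; consequently at every fixed level $d$ the subpolygons tile $P$ and share at most $n-3$ cut diagonals, giving $\sum_{S}|S|\le n+2(n-3)=O(n)$ with no blow-up across depth. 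Summing level by level: the $m^{2}/\tau$ contributions at level $d$ total at most $(\max_S|S|)\cdot\bigl(\sum_S|S|\bigr)/\tau_{d}=O(0.6^{d}n)\cdot O(n)/(\kappa^{d}s)=O(n^{2}/s)\cdot(0.6/\kappa)^{d}$, a convergent geometric series when $\kappa$ is a constant with $0.6<\kappa<1$; this also absorbs the $\tau\le 10$ base case, whose leaves all sit at a single level $d^{*}$ with $\sum_S|S|^{2}=O(0.6^{d^{*}}n\cdot n)=O(n^{2}/s)$ for $\kappa$ in that range, while the $10\tau\ge m$ base case contributes $O(|S|)$ per leaf, hence $O(n)$ per level and $O(n\log n)$ overall. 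Finally, the secondary Har-Peled term at level $d$ totals $O\!\bigl(\sum_S|S|\cdot\log s\log^{4}(n/s)\bigr)=O(n\log s\log^{4}(n/s))$ (using $\tau_d\le s$ and $\max_S|S|/\tau_d=O(n/s)$); multiplying by the number of levels and separating the regimes $s\le\sqrt n$ (depth $O(\log s)$, and $\log s\le\log(n/s)$ there) and $s>\sqrt n$ (depth $O(\log(n/s))$) yields $O(n\log s\log^{5}(n/s))$ in both. Adding the three kinds of contributions proves the bound, and fixing $\kappa$ to any constant in $(0.6,1)$ makes everything go through.

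\textbf{Main obstacle.} The step I expect to require the most care is the structural claim underpinning the time analysis --- that the cut diagonals from all recursion levels are mutually non-crossing, so that $\sum_S|S|=O(n)$ at every level and subproblem sizes do not compound with depth; this is what turns an apparent exponential-in-depth blow-up into a single extra logarithmic factor. Everything downstream is routine geometric-series bookkeeping on top of Har-Peled's guarantee and Lemmas~\ref{lem:closeAlternation} and~\ref{lem:farAlternation}, and the extreme regimes $s\le 10$ and $s\ge n/10$ reduce immediately to the two base cases and are checked by hand.
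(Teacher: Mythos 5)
Your proposal analyzes exactly the paper's algorithm (same invariant on $a_c$, same use of Lemmas~\ref{lem:closeAlternation} and~\ref{lem:farAlternation}, same base cases), and the correctness and space arguments coincide with the paper's. The genuine difference is in how the time bound is derived. The paper sets up the two-parameter recurrence $T(\eta,\tau)\le c_{\textrm{HP}}\bigl(\eta^2/\tau+\eta\log\tau\log^4(\eta/\tau)\bigr)+\sum_j T(\eta_j,\tau\kappa)$, measuring problem size by the number of triangles $\eta=n-2$ so that $\sum_j\eta_j=\eta$ holds exactly and $\eta_j\le c\eta$ with $c=6/10$, and solves it by substitution, picking $\kappa\in(c,1)$ and $c_R$ large enough that $c_{\textrm{HP}}+(c/\kappa)c_R\le c_R$ and $c_{\textrm{HP}}-c_R\log(\kappa/c)\le 0$. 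You instead unroll the recursion tree and sum level by level, which requires your global structural claim that all cut diagonals produced anywhere in the recursion are pairwise non-crossing chords of $P$, hence $\sum_S|S|=O(n)$ at every depth while $\max_S|S|\le(6/10)^d n$ and $\tau_d=\kappa^d s$; with any constant $\kappa\in(6/10,1)$ the quadratic terms form a convergent geometric series and the secondary term picks up one factor of the depth $O(\min(\log s,\log(n/s)))$, which your two-regime argument correctly turns into the stated $\log^5(n/s)$. Both accountings are sound and give the same bound; note that what you single out as the main obstacle is handled in the paper implicitly, since the per-node identity $\sum_j\eta_j=\eta$ fed through the induction encodes level by level exactly what your non-crossing claim provides globally. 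Your treatment of the two base cases also checks out ($0.6^{d^*}\le\kappa^{d^*}=O(1/s)$ for the $\tau\le 10$ leaves; the Chazelle leaves are interior-disjoint, so they in fact cost $O(n)$ in total, even less than your $O(n\log n)$ estimate, which is in any case dominated in every regime).

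One ingredient of the paper's proof is missing from yours: how a recursive call is handed its subpolygon. A subproblem at depth $d$ may have far more than $\tau_d$ vertices, so it cannot be copied into workspace, and Har-Peled's routine (and hence your per-call cost $O(|S|^2/\tau_d+|S|\log\tau_d\log^4(|S|/\tau_d))$) requires constant-time random access to the subpolygon's vertices. The paper proves an invariant on the shape of every subproblem --- a single contiguous chain of vertices of $P$ plus at most $\tau$ explicitly stored cut vertices --- and preserves it by always starting the next geodesic at the first or last cut vertex; after renaming indices, this yields an $O(\tau)$-word description with $O(1)$-time vertex access. Without this (or an equivalent) representation argument, your statements that a call at depth $d$ stores only $O(\kappa^d s)$ words and that Har-Peled's bound applies verbatim to each subproblem are not yet justified. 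This is a repairable omission rather than a flaw in the approach, but it is a step you would need to supply.
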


In the remainder of the section we prove correctness of our algorithm and analyze its time and space requirements.

\subsection{Correctness}

We maintain the invariant that the current diagonal $a_c$ records the
already triangulated portion of the polygon.  Every edge we output is a proper diagonal of $P$ and we recurse on subpolygons created by partitioning by such edges.  Thus, we never report
an edge of the triangulation more than once. Hence, in order to show
correctness of the algorithm, it suffices to prove that the recursion
eventually terminates.

During the execution of the algorithm, we invoke recursion for polygons $Q_i$, $R$, and~$P_{\med}$ (the latter one only when we have reached $\vmed$). By Lemma~\ref{lem:closeAlternation} all of these polygons have size at most $n/2+\tau$. Since we only enter this level of recursion whenever $\tau\leq n/10$ (see lines 1--2 of Algorithm~\ref{algo_main}), overall the size of the problem decreases by a factor of $6/10$,
thereby guaranteeing that the recursion depth is bounded by $O(\log n)$.  Note that there are several conditions for stopping the recursion, but only one of them is needed to guarantee $O(\log n)$ depth.

At each level of recursion we use the shortest-path algorithm of Har-Peled. This algorithm needs random access in constant time to the vertices of the polygon. Thus, we must make sure that this property is preserved at all levels of recursion. A simple way to do so would be to explicitly store the polygon in memory at every recursive call, but this may exceed the space bounds of the algorithm.

Instead, we make sure that the subpolygon is described by $O(\tau)$ words. By construction, each subpolygon consists of a single chain of contiguous input vertices of $P$ and at most~$\tau$ additional \emph{cut} vertices (vertices from the geodesics at higher levels).  We can represent the portion of $P$ by the indices of the first and last vertex of the chain and explicitly store the indices of all cut vertices. By an appropriate renaming of the indices within the subpolygon, we can make the vertices of the chain appear first, followed by the cut vertices. Thus, when we need to access the $i$th vertex of the subpolygon, we can check if $i$ corresponds to a vertex of the chain or one of the cut vertices and identify the desired vertex in constant time, in either case.

Now, we must show that each recursive call satisfies this property. Clearly this holds for the top level of recursion, where the input polygon is simply $P$ and no cut vertices are needed. At the next level of recursion each subproblem has up to $\tau$ cut vertices and a chain of contiguous input vertices.  We ensure that this property is satisfied at lower levels of recursion by an appropriate choice of $v_1$ (the vertex from which we start the path): at each level of recursion we build the next geodesic starting from either the first or last cut vertex. This might create additional cut vertices, but their position is immediately after or before the already existing cut vertices (see Figure~\ref{fig:rayshooting}~(right)). 
This way we guarantee constant-time random access to current instance vertices, at all levels of recursion.

\subsection{Time Bounds}\label{sec:time}

We use a two-parameter function $T(\eta,\tau)$ to bound the  expected running time of the algorithm at all levels of recursion. The first parameter $\eta$ represents the size of the problem. Specifically, for a polygon of $n$ vertices we set $\eta=n-2$, namely, the number of triangles to be reported. The second parameter $\tau$ gives the space bound for the algorithm. Initially, we have $\tau=s$, but this value decreases by a factor of $\kappa$ at each level of recursion. Recall that $\tau$ is also the workspace limit for the shortest-path algorithm of Har-Peled that we invoke as part of our algorithm. In addition, $\tau$ is also used as the limit on the length of the geodesic we explore looking for an alternating diagonal. Note that the memory usage of both our algorithm as well as the algorithm by Har-Peled is $O(\tau)$, that is, there are hidden constants. In order to solve the recursions, we cannot use a big-O notation and for readability we assume all hidden constants are 1 and simply write $\tau$ instead.

When $\tau$ becomes really small (say, $\tau \leq 10$) we have run out of allotted space. Thus, we triangulate the polygon using the constant workspace method of Asano \etal~\cite{abbkmrs-mcasp-11} that runs in $O(\eta^2)$ time. Similarly, if the space is large when compared to the instance size (say, $10 \tau\geq \eta$) the polygon fits in the allowed workspace, hence we use Chazelle's algorithm~\cite{c-tsplt-91} for triangulating it. In both cases we have $T(\eta,\tau) \leq c_{\Delta}(\eta^2/\tau+\eta)$ for some constant $c_{\Delta}>0$. 

Otherwise, we partition the problem and solve it recursively. First we bound the time needed to compute the partition. The main tool we use is computing the geodesic between $v_1$ and $\vmed$. This is done by the algorithm of Har-Peled~\cite{Har-Peled15} which takes $O(\eta^2/\tau + \eta \log \tau \log^4 (\eta/\tau))$ expected time and uses $O(\tau)$ space. Recall that we may pause and resume it often during the execution of our algorithm, but overall we only execute it once, not counting recursive calls.

Another operation that we execute is \textsc{FindAlternatingDiagonal} (i.e., Lemma~\ref{lem:farAlternation}) which takes $O(\eta)$ time and $O(1)$ space. In the worst case, this operation is invoked once for every $\tau$ vertices of $\pi$. Since $\pi$ cannot have more than $\eta$ vertices, the overall time spent in this operation is bounded by $O(\eta^2/\tau+\eta)$. Thus, ignoring the time spent in recursion, the expected running time of the algorithm is $c_\textsc{HP}(\eta^2/\tau+ \eta \log \tau \log^4 (\eta/\tau))$ for some constant $c_\textsc{HP}$, which without loss of generality we assume to be at least $c_{\Delta}$.
We thus obtain a recurrence of the form

\[ T(\eta,\tau) \leq c_\textsc{HP}\left(\frac{\eta^2}{\tau}+ \eta \log \tau \log^4 \frac{\eta}{\tau}\right) + \sum_j T(\eta_j, \tau\kappa).
\]
Recall that the values $\eta_j$ cannot be very large, compared to $\eta$. Indeed, each subproblem can have at most a constant fraction $c$ of vertices of the original one (i.e., the way in which lines 1--4 of Algorithm~\ref{algo_main} have been set, we have $c=6/10$). Thus, each $\eta_j$ satisfies $\eta_j \leq c(\eta+2)-2 \leq c\eta$. Since subproblems partition the current polygon, we also have $\sum_j \eta_j = \eta$. 

We claim that there exists a constant $c_R$, so that, for any $\tau,\eta>0$, $T(\eta,\tau) \leq c_R(\eta^2/\tau+ \eta\log \tau \log^{5} (\eta/\tau))$.
Indeed, when $\tau$ is small or the problem size fits into memory (for our choice of constants, this corresponds to $\tau \leq 10$ or $10\tau \geq \eta$) we have $T(\eta,\tau) \leq c_{\Delta}(\eta^2/\tau+\eta) \leq c_R (\eta^2/\tau+\eta)$ for any value of $c_R$ such that $c_R \geq c_{\Delta}$. Otherwise, we use induction and obtain

\begin{align*}
T(\eta,\tau) &\leq c_{\textrm{HP}}\left(\frac{\eta^2}{\tau}+ \eta\log \tau \log^4 \frac{\eta}{\tau}\right) + \sum_j T(\eta_j,\tau\kappa) \\
&\leq c_{\textrm{HP}}\left(\frac{\eta^2}{\tau}+ \eta\log \tau \log^4 \frac{\eta}{\tau}\right) + \frac{c_R}{\tau\kappa}\sum_j \eta_j^2+ c_R \sum_j \eta_j(\log \tau\kappa)\log^{5}\frac{\eta_j}{\tau\kappa} \\
&\leq \left(c_{\textrm{HP}}\frac{\eta^2}{\tau} + \frac{c_R}{\tau\kappa}\sum_j \eta_j^2\right) + c_{\textrm{HP}}\eta\log \tau \log^4\frac{\eta}{\tau} + c_R\sum_j \eta_j\log \tau\log^{5}\frac{\eta_j}{\tau\kappa}\\
&\leq \left(c_{\textrm{HP}}\frac{\eta^2}{\tau} + \frac{c_R}{\tau\kappa}\sum_j \eta_j^2\right) + c_{\textrm{HP}}\eta\log \tau \log^4 \frac{\eta}{\tau} + c_R\sum_j \eta_j\log \tau\log^{5}\frac{c\eta}{\tau\kappa}\\
&\leq \left(c_{\textrm{HP}}\frac{\eta^2}{\tau} + \frac{c_R}{\tau\kappa}\sum_j \eta_j^2\right) + c_{\textrm{HP}}\eta\log \tau \log^4 \frac{\eta}{\tau} +c_R\eta\log \tau\log^{5}\frac{c\eta}{\tau\kappa}.
\end{align*} 
The sum $\sum_j \eta_j^2$ is at most $c\eta \sum_j \eta_j = c\eta^2$, since $\eta_j \leq c\eta$ and $\sum_j \eta_j = \eta$, yielding

\begin{align*}
T(\eta,\tau) &\leq \left(c_{\textrm{HP}}\frac{\eta^2}{\tau} + \frac{c_Rc}{\kappa}\frac{\eta^2}{\tau}\right) + c_{\textrm{HP}}\eta\log \tau \log^4\frac{\eta}{\tau} +c_R\eta\log \tau\log^{5}\frac{c\eta}{\tau\kappa}\\
&\leq \frac{c_R\eta^2}{\tau}  + c_{\textrm{HP}}\eta\log \tau \log^4 \frac{\eta}{\tau} +c_R\eta\log \tau\log^{5}\frac{c\eta}{\tau\kappa},
\end{align*}
where the inequality $c_{\textrm{HP}}+\frac{c}{\kappa}c_R \leq c_R$ holds for sufficiently large values of $c_R$ and a value of $\kappa<1$ that is larger than $c$ and sufficiently close to $1$ (say, $c_R=10c_{\textrm{HP}}$ and $\kappa = 9/10$). Now we focus on the last two terms of the inequality.  We upper bound $\log^{5}(c\eta/\tau\kappa)$ by $\log^4(\eta/\tau)\log(c\eta/\tau\kappa) =(\log^4(\eta/\tau))(\log(\eta/\tau) - \log(\kappa/c))$ and substitute to obtain

\begin{align*}
T(\eta,\tau) &\leq \frac{c_R\eta^2}{\tau}  + c_{\textrm{HP}}\eta\log \tau \log^4 \frac{\eta}{\tau} + \left(c_R\eta\log \tau\log^4\frac{\eta}{\tau}\right)\left(\log\frac{\eta}{\tau} - \log\frac{\kappa}{c}\right)\\
&\leq \frac{c_R\eta^2}{\tau}  + \left(\eta\log\tau\log^4\frac{\eta}{\tau}\right)\left(c_{\textrm{HP}} +c_R\log\frac{\eta}{\tau} - c_R\log\frac{\kappa}{c}\right)\\
&\leq \frac{c_R\eta^2}{\tau}  +c_R \left(\eta\log \tau\log^{5}\frac{\eta}{\tau}\right)\\ 
&= c_R \left(\eta^2/\tau  +\eta\log \tau\log^{5}\frac{\eta}{\tau}\right),
\end{align*}
as claimed.
Again, the $c_{\textrm{HP}} -c_R\log(\kappa/c)\leq 0$ inequality holds for sufficiently large values of $c_R$ that depend on $c_{\textrm{HP}}$, $\kappa$ and $c$.

\subsection{Space Bounds}
\label{sec:space}

We now show that the space bound holds. Recall that that we picked a parameter $\tau$ to bound the amount of space we use. Our algorithm uses more than $\tau$ space, but does not exceed $L \cdot \tau$ (for some large absolute constant $L>0$). 

First we count the amount of space needed in recursion. Our algorithm will stop the recursion whenever the problem instance fits into memory or $\tau$ becomes small (in the example we chose, when $\tau \leq 10$). Since the value of $\tau$ decreases by a constant factor at each level of recursion, we will never recurse for more than $\log_\kappa s=O(\log s)$ levels. Thus, the implicit memory consumption used in recursion does not exceed the space bounds.

Now we bound the size of the workspace needed by the algorithm at level $i$ of the recursion (with the main algorithm invocation being level~$0$) by $O(s\cdot \kappa^i)$. Indeed, this is the threshold of space we receive as input (recall that initially we set $\tau=s$ and that at each level we reduce this value by a factor of $\kappa$). This threshold value is the amount of space for the shortest-path computation algorithm invoked at the current level, as well as the limit on the number of vertices of $\pi$ that are stored explicitly before invoking procedure \textsc{FindAlternatingDiagional}. Once we have found the new alternating diagonal, the vertices of $\pi$ that were stored explicitly are used to generate the subproblems for the recursive calls. 

The space used for storing the intermediate points can be reused after the recursive executions are finished, so overall we conclude that at the $i$th level of recursion the algorithm never uses more than $O(s\cdot \kappa^i)$ space. Since we never have two simultaneously executing recursive calls at the same level, and $\kappa<1$ is a constant, the total amount of space used in the execution of the algorithm is bounded by 

\[
  O(s) + O(s\cdot \kappa) + O(s\cdot \kappa^2 ) + \ldots = O(s).
\]

\subsection{Considerations on the output}\label{sec:output}

For simplicity of the explanation, we assumed above that only edges of the triangulation needed to be reported. As mentioned in the introduction, our algorithm can be modified so that it reports the resulting faces (triangles) of the decomposition together with their adjacency relationship. 

For example, we could list all the triangles (say, as triples of vertex indices) and for each one we can give its adjacent triangles. Similarly, for each edge (identified by a pair of indices) we can also report the clockwise and counterclockwise neighbor at each endpoint, and so on. Recall that in our computation model the output cannot be modified, so all information about a
triangle should be output at the same time. For example, when we report the first triangle, we need to know the identities of its adjacent triangles, which we have not yet computed.

In order to accomplish this, we require that the space allowance $s$ be at least $\log n$. Recall that at each level of recursion the size of the problem decreases by a constant factor. In particular, if $s \geq \log n$, the algorithm does not run out of recursion space, and line~4 of Algorithm~\ref{algo_main} is never executed. 

That is, our algorithm partitions~$P$ into subpolygons~$P'$ until they fit into memory and triangulated using Chazelle's algorithm~\cite{c-tsplt-91}. Since the resulting triangulation of~$P'$ fits into memory, we can afford to report extra information. 

This extra information is explicitly available at the bottom level of each recursion (i.e., within a subpolygon $P'$), so we can report it together with the diagonals of the triangulation. The only information that we may not have available is for the diagonals that separate $P'$ from the rest of the polygon and for the triangles that use these edges. This information will appear in two subpolygons, and the neighboring information has to be coordinated between the two instances. 

For this purpose, we slightly alter the triangulation invariant associated with~$a_c$: subpolygon $P_1$ has been triangulated and all information has been reported \emph{except} for the diagonals (and the triangles that use those edges) between the two alternating diagonals. We explicitly store the pertinent adjacency information that has already been computed, and we report it only when at a later time the missing information becomes available.

This modification does not affect the running time or correctness of the algorithm. Thus, it suffices to show that space constraints are not exceeded either. 
At any given moment of the operation of the algorithm, we store a constant amount of additional data associated with each diagonal that delimits currently existing subproblems and that we already record. As shown in Section~\ref{sec:space} the diagonals themselves are stored explicitly and that storage fits into $O(s)$ storage. In particular, the additional information will not exceed this constraint either.

\section{Other applications}\label{sec:extensions}
Algorithm~\ref{algo_main} introduces a general approach of solving problems recursively by partitioning~$P$ into subpolygons, each of which has $O(s)$ vertices.  We focused on triangulating~$P$, so at the bottom of the recursion we used Chazelle's algorithm~\cite{c-tsplt-91} or Asano \etal's algorithm~\cite{abbkmrs-mcasp-11} depending on the available space. However, the same approach can be used for other structures: it suffices to replace the base cases of the recursion (lines 2 and 4 of Algorithm~\ref{algo_main}) with the appropriate algorithms. 

As an illustration of other possible applications, we describe the modifications needed for computing the shortest-path tree of a point inside a simple polygon and for partitioning a polygon into $\Theta(s)$ subpolygons, each with $\Theta(n/s)$ vertices. We believe other applications can be obtained using the same strategy. 

\subsection{Shortest-Path Tree}
\label{sec:spt}

Given a simple polygon $P$ and a point $p\in P$ (which need not be a vertex of $P$), the \emph{shortest-path tree} of $p$ (denoted by $\SPT(p)=\SPT(p,P)$, see Figure~\ref{fig:ShortestPathtree}, left)  is the tree formed by the union of all geodesics from $p$ to vertices of $P$. ElGindy~\cite{e-hdpa-85} and later Guibas \etal~\cite{ghlst-ltavsppitsp-87} showed how to compute the shortest-path tree in linear time using $O(n)$ space. 
In order to use the framework of Algorithm~\ref{algo_main}, we also need an algorithm that computes $\SPT(p)$ using a constant number of variables.

\begin{figure}
  \centering
  \includegraphics[width=\textwidth]{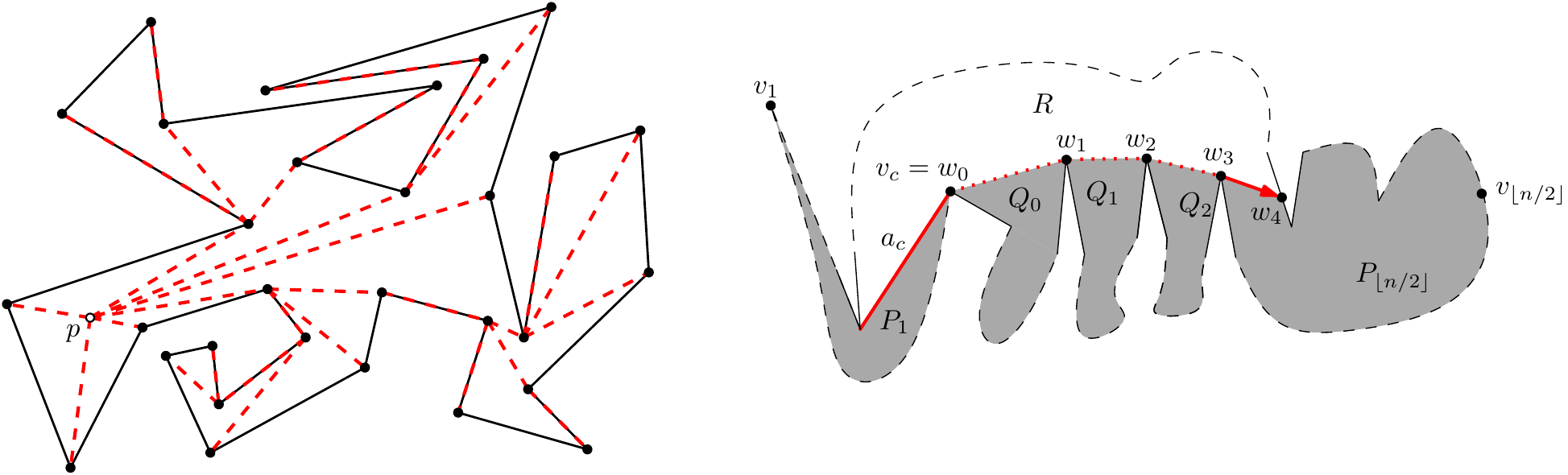}
  \caption{(left) The shortest-path tree for a point $p$, depicted with red dashed segments. (right) In the framework of Algorithm~\ref{algo_main}, we use a different method to find an alternating diagonal after walking for $\tau$ steps ($w_4$ is now the first proper intersection between the ray from $w_2$ towards $w_3$ and the boundary of $P$). Further note that paths from all vertices within a subpolygon corresponding to a recursive subproblem pass through a common vertex (the paths from $Q_i$ pass through $w_i$, for example). Thus, in each subpolygon we can forget about $v_1$ and generate up to $\tau+2$ independent subproblems.}
  \label{fig:ShortestPathtree}
\end{figure}

\begin{lemma}\label{lem_o1wspace}
Let $P$ be a simple polygon with $n$ vertices and let $p$ be any point of $P$ (vertex, boundary, or interior). We can compute $\SPT(p)$ in $O(n^2\log n)$  expected time using $O(1)$ variables.
\end{lemma}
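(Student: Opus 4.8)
The plan is to walk along $\SPT(p)$ itself --- visiting each of its at most $n$ edges exactly twice, once in each direction, by the classical ``keep your hand on the wall'' traversal of a plane tree --- and to report each edge the first time it is crossed, while maintaining only $O(1)$ words of state (in particular, no recursion stack). The state is the pair $(x,y)$, where $x$ is the current vertex of $\SPT(p)$ and $y$ is the neighbour through which $x$ was entered; to advance we need, at $x$, the cyclic order of the tree-edges incident to $x$. For the root $p$ --- whether it is a vertex of $P$, a point in the relative interior of an edge, or an interior point --- its incident tree-edges go exactly to the vertices of $P$ visible from $p$, in angular order. For a non-root vertex $u$ with parent $\rho$, the geodesic $\pi(p,u)$ enters $u$ along the direction $\rho\to u$; shooting a ray from $u$ in that direction until it leaves $P$ gives the \emph{window} $uz$ of $u$, which cuts $P$ into two simple polygons. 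We use the standard structure of shortest-path trees (cf.~\cite{ghlst-ltavsppitsp-87,e-hdpa-85}): the children of $u$ are the vertices of $P$ that $u$ sees and that lie in the open sub-polygon $P_u$ behind the window (no shortest path from $p$ reaches $P_u$ without passing through $u$), and they occupy a contiguous angular interval around $u$, one end of which is the window $uz$.

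One step of the walk, given $(x,y)$, proceeds as follows. First we recover $\rho(x)$, the parent of $x$ in $\SPT(p)$: it is the first vertex of the geodesic $\pi(x,p)$, and it determines the window of $x$ and hence on which side the children lie (this is skipped if $x=p$). Then, rotating a ray about $x$ clockwise starting from the direction $x\to y$, we locate the next tree-edge at $x$: the next vertex of $P$ visible from $x$ inside the pocket $P_x$, or --- once those are exhausted --- the edge $x\to\rho(x)$, wrapping around (with $O(1)$ extra bookkeeping at the root to recognise termination). We move along that edge and update $(x,y)$; if the edge was crossed from a vertex toward one of its children we report it, so every tree-edge is reported exactly once. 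Two constant-workspace primitives do the work of each step: (i) ``the next vertex of $P$ visible from $x$ within a prescribed angular sector'', a single angular sweep of $\partial P$ maintaining $O(1)$ values, in $O(n)$ time, exactly as in Lemma~1 of~\cite{bkls-cvpufv-13}; and (ii) the first edge of $\pi(x,p)$, obtained by shooting the ray $x\to p$ and, if it is blocked before reaching $p$, identifying the reflex vertex that maximises the turn angle at $x$ among those obstructing the view toward $p$ --- the same device used in the proof of Lemma~\ref{lem:farAlternation} --- and certifying that this vertex is indeed the first bend; this costs $O(n\log n)$ time and $O(1)$ space.

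For the accounting: the walk performs $O(n)$ steps; primitive (i) is invoked $O(n)$ times at $O(n)$ each, and primitive (ii) is invoked once per vertex visit, i.e.\ $\sum_{x}\deg_{\SPT(p)}(x)=O(n)$ times at $O(n\log n)$ each, so the whole computation runs in $O(n^2\log n)$ time (in fact deterministically, hence certainly in expectation) using $O(1)$ variables, as claimed. I expect the delicate part to be primitive (ii): proving that the ``extremal reflex vertex'' rule really pins down $\mathrm{parent}(x)$ --- the first bend of the taut string from $x$ to $p$ --- and that this can be certified within $O(n\log n)$ time and $O(1)$ space, rather than by tracing the entire geodesic. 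The remainder is the routine but somewhat tedious handling of degeneracies: collinear vertices, $p$ in the relative interior of an edge of $P$, a window $uz$ whose far endpoint $z$ is itself a vertex of $P$ (so that some vertices of $P_u$ are ambiguously split between $u$ and $z$ and must be assigned by a fixed tie-break), and vertices of $P$ lying exactly on a window.
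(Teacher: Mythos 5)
There is a genuine gap, and it sits exactly where you flag it: primitive~(ii). The rule you propose for finding the parent of $x$ --- shoot the ray from $x$ toward $p$ and, if it is blocked, take the reflex vertex maximizing the turn angle among those obstructing the segment $xp$ --- does not identify the first bend of the geodesic $\pi(x,p)$. What Lemma~1 of \cite{bkls-cvpufv-13} (and the proof of Lemma~\ref{lem:farAlternation}) guarantees about that extremal reflex vertex is only that it is \emph{visible} from $x$; it need not lie on $\pi(x,p)$ at all. The geodesic may go around the blocking feature on the other side, or its first bend may be a visible reflex vertex far from the segment $xp$; deciding which way the taut string leaves $x$ requires global information (which side of a chord contains $p$, iterated), and no local angular extremum over the blocking triangle certifies it. You also give no argument for the claimed $O(n\log n)$ deterministic cost of the ``certification.'' This is precisely the nontrivial content of the lemma, and the paper supplies it differently: a randomized constant-workspace procedure (adapting Asano~\etal~\cite{amrw-cwagp-10}) that maintains a cone $C$ at the source, repeatedly picks a \emph{random} reflex vertex $r$ in $C$, ray-shoots toward $r$ and locates which component of the resulting split contains the target, and either outputs $r$ as the first link or shrinks $C$; each iteration costs $O(n)$ and a randomized-binary-search argument gives expected $O(\log n)$ iterations, hence expected $O(n\log n)$ per first-link query.

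Your surrounding architecture is also heavier than necessary, and partly circular: the Euler-tour traversal needs the window (hence the parent) of every visited vertex, so it cannot be used to avoid primitive~(ii); conversely, once primitive~(ii) is available there is no need for the tree walk, the pocket/children characterization, or the angular-sweep primitive~(i) at all. The paper simply loops over all $n$ vertices $v$ of $P$, computes the first edge of $\pi(v,p)$ with the constant-workspace procedure above, and outputs it; the union of these $n$ parent edges is $\SPT(p)$, each edge is reported once (by its endpoint farther from $p$), and the total expected time is $n\cdot O(n\log n)=O(n^2\log n)$ with $O(1)$ variables. So to repair your proof you would either have to supply a correct $O(1)$-workspace first-link subroutine (e.g., the randomized cone-shrinking one), at which point the tree traversal and its degeneracy handling become unnecessary, or find a genuinely different certificate for the parent --- which the extremal-blocking-vertex rule is not.
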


\begin{proof}
We first show a randomized procedure that, given a simple polygon, a source~$q$, and a target $t$ computes the first link in the shortest path~$\sigma$ from~$q$ to~$t$ in expected $O(n\log n)$ time using $O(1)$~space. Our algorithm executes this procedure $n$~times, setting $q$ to be each vertex of $P$ in turn, and reporting the first edge towards~$p$. The union of these segments is $\SPT(p)$.

Thus, it suffices to show how to compute one edge of $\sigma$ efficiently. The constant-workspace shortest-path algorithm of Asano~\etal~\cite{amrw-cwagp-10} computes the entire shortest path $\sigma$ from $q$ to $t$ in $O(n^2)$ time, but computing a single segment of $\sigma$ may need $\Omega(n^2)$ time. Below we slightly modify their approach to ensure that we do not spend too much time in one step.

Note that we will only need this procedure when $q$ is a vertex of $P$; thus, for simplicity of presentation, we will assume so in the remainder of the proof.  The more general case can be handled with only minor modifications. We begin by assuming that the first link of $\sigma$ lies in the interior of a given cone~$C$ with apex~$q$ (see Figure~\ref{fig:step}); initially $C$ is delimited by the directions of the edges incident to~$q$.  Let $R_C$ be the set of all reflex vertices of $P$ lying in the interior of~$C$ (we include $t$ in $R_C$ as well if it lies in the interior of~$C$).

\begin{figure}
  \centering
  \includegraphics[width=0.9\textwidth]{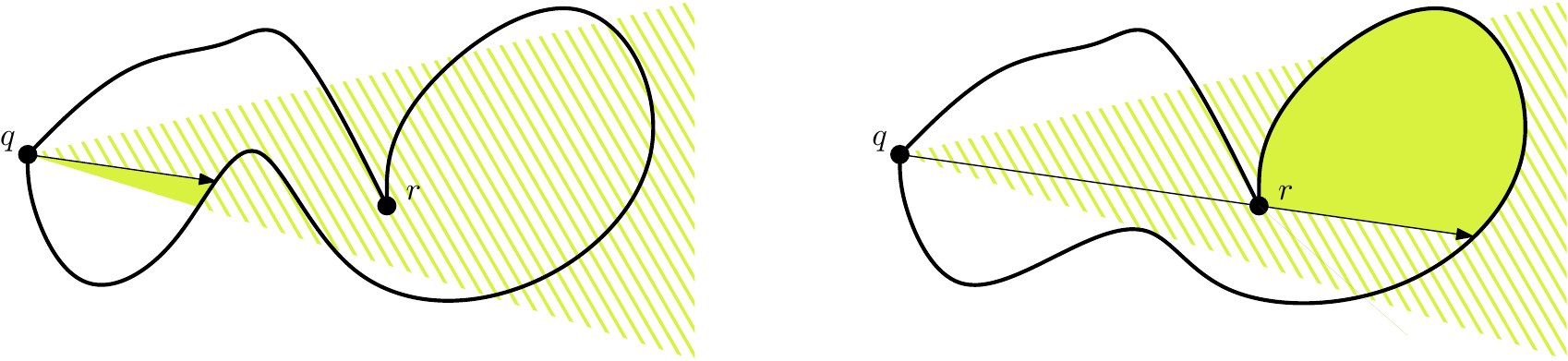}
  \caption{Given the cone $C$ (dashed green), we shoot a ray towards a reflex vertex~$r$ in $C$. If $r$ is not visible (left image), the cone $C$ is split into two by the ray (solid and dashed regions). The region that does not contain $t$ can be discarded (the resulting cone will have the ray as one of the new boundary edges).  If $r$ is visible (right image) the ray splits $P$ into three components, one of which is hidden (i.e., the only visible portion of the hidden component from $q$ is in the ray itself). If $t$ lies in the hidden region (solid in the figure) we can report $r$ as the first vertex visited on the path from $q$ to $t$. Otherwise, we can shrink $C$ in a similar way as if $r$ were not visible. }  
  \label{fig:step}
\end{figure}
By our assumption, the first link of $\sigma$ must go towards a point of $R_C$, even if $\sigma$ leaves the cone later. So, if $R_C$ contains no reflex vertex, then $\sigma=qt$, the algorithm returns~$t$, and we are done.  Otherwise, we pick a reflex vertex $r\in R_C$ uniformly at random and shoot a ray $\vec{qr}$ from~$q$ towards~$r$.  Let $q'$ be its first proper intersection with the boundary of~$P$.

The cone $C$ is split into two cones by the ray~$\vec{qr}$, and the segment $qq'$ splits $P$ into two or three components, depending on whether or not $q$~sees~$r$. We compute the component $P'$ that contains~$t$. The following cases may occur (refer to Figure~\ref{fig:step}).
\begin{description}
\item[$q$ does not see $r$:] Then, $\sigma$ cannot go directly from $q$ to $r$. Moreover, $\sigma$ cannot enter $P\setminus P'$, so its first link must emanate from $q$ into $P'$.  This determines the side of $\vec{qr}$ it must lie on.
Thus, we may shrink $C$ to a smaller cone and continue.

\item[$q$ can see $r$:] In this case $qq'$ splits $P$ into three components, two of which contain $q$ on their boundary, and the third one (the \emph{hidden} component) that does not. If $t$ does not lie in the hidden component, then we can shrink $C$ to a smaller cone with the same analysis as above and continue.  If $t$ lies in the hidden component,
$qr \subset \sigma$, the algorithm returns~$r$, and we are done.
\end{description}

See Asano \etal~\cite{amrw-cwagp-10} for proof of correctness and how to handle degenerate cases. Overall, in one iteration we either find the desired vertex and terminate, or reduce the size of $R_C$. Each step can be done in linear time (we perform one ray-shooting operation and one point-location operation in constant workspace; each can be done in $O(n)$ time by brute force).
Our algorithm executes a procedure analogous to a randomized binary search on the set of directions to the vertices in $R_C$, bisecting it randomly and recursing on one of the ``halves,'' and terminating (at the latest) when this set is a singleton.  Therefore the expected number of iterations is logarithmic and the total expected work required to find the first link of $\sigma$ is $O(n \log n)$.
\end{proof}

Note that we can make the above algorithm deterministic by using selection instead of picking a vertex of $R_C$ at random. This comes at a slight increase in the running time as a function of $s$ (see the detailed trade-off description and analysis in~\cite{bkls-cvpufv-13}).

Since we now have algorithms for $\Theta(1)$ and for $\Theta(n)$ words of working memory, we can use our general strategy to obtain a trade-off for the entire range of the space parameter~$s$.
 
\begin{theorem}
Let $P$ be a simple polygon with $n$ vertices and let $p$ be any point of $P$ (vertex, boundary, or interior). For any $s \leq n$ we can compute the shortest-path tree of $p$, $\SPT(p)$, in $O(n^2\log n/s+n\log s\log^{5} (n/s)$ expected time using $O(s)$ variables.
\end{theorem}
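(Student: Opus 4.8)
The plan is to run the divide-and-conquer scheme of Algorithm~\ref{algo_main} essentially verbatim, changing only the two base cases and the diagonal-finding subroutine. When the current subpolygon $P'$ fits in memory ($10\tau\ge\eta$), instead of Chazelle's triangulation~\cite{c-tsplt-91} we compute $\SPT$ of the appropriate source in $P'$ by the linear-time, linear-space algorithm of ElGindy~\cite{e-hdpa-85} and Guibas~\etal~\cite{ghlst-ltavsppitsp-87}; when $\tau$ is tiny ($\tau\le 10$), instead of the constant-workspace triangulation of~\cite{abbkmrs-mcasp-11} we invoke the $O(1)$-space, $O(\eta^2\log\eta)$-time procedure of Lemma~\ref{lem_o1wspace}. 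At a generic level the algorithm proceeds as before: it uses Har-Peled's procedure~\cite{Har-Peled15} to generate the geodesic $\pi$ from the current source to the antipodal boundary vertex, walks along $\pi$, pausing after $\tau$ steps or upon reaching the first alternating diagonal, recurses on the pieces $R,Q_1,\dots$ (and $P_{\med}$ once $\pi$ is fully traversed), and records progress with an alternating diagonal $a_c$. The only geometric change is \textsc{FindAlternatingDiagonal}: since the top/bottom labelling is not needed for $\SPT$, after walking along a convex subchain $w_0,\dots,w_{\tau}$ of $\pi$ we simply extend its last edge until it first hits $\partial P$ and use (an endpoint of) that edge as the new cut vertex (Figure~\ref{fig:ShortestPathtree}, right); this is a single $O(\eta)$-time, $O(1)$-space scan, and the counting argument of Lemmas~\ref{lem:closeAlternation} and~\ref{lem:farAlternation} still bounds each new piece by $\lceil\eta/2\rceil+\tau$ vertices, so the recursion depth remains $O(\log n)$.

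For correctness I would first establish the structural fact that makes this recursion legitimate: for every subpolygon $P'$ in the partition there is a distinguished boundary vertex $a$ — the endpoint of the diagonal bounding $P'$ that lies closer to the source along the partitioning geodesic (for $R$, the vertex $v_c=w_0$) — such that, inside $P$, the geodesic from $p$ to \emph{every} vertex $v$ of $P'$ is the concatenation of a common stem (the already-computed geodesic from $p$ to $a$) with the geodesic from $a$ to $v$ taken inside $P'$. This holds because the bounding diagonal is an edge of a geodesic rooted at the source, so $a$ is the apex of the funnel from the source with that diagonal as base, and any path from $p$ into $P'$ must cross that diagonal and hence pass through $a$. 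Consequently $\SPT(p)$ restricted to $P'$ is exactly the stem together with $\SPT(a,P')$, so it is correct to recurse on $P'$ with $a$ as the new source, ``forgetting'' $p$ entirely (as indicated in Figure~\ref{fig:ShortestPathtree}). The stem edges are edges of the partitioning geodesic, emitted exactly once by the call that creates them; since the subpolygons are interior-disjoint, no $\SPT$ edge is reported twice. As in Section~\ref{sec:main} the instance size shrinks by a factor of $6/10$ per level, so the recursion terminates, and the $O(\tau)$-word encoding of each subpolygon (a contiguous input chain plus at most $\tau$ cut vertices, with constant-time random access) is maintained exactly as before, so the analysis of Section~\ref{sec:space} applies unchanged and the workspace stays $O(s)$. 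When $p$ is not a vertex of $P$ we additionally carry $p$ as an extra cut vertex of each piece so the source is always available, which adds only $O(1)$ per level; at the top level, if $p$ is interior, the first step computes the geodesics from $p$ to two antipodal boundary vertices, which together partition $P$, after which the analysis is as in the vertex case.

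For the time bound I would set up the recurrence of Section~\ref{sec:time} with a single change: the base-case cost becomes $c_\Delta(\eta^2\log\eta/\tau+\eta)$, since Lemma~\ref{lem_o1wspace} costs $O(\eta^2\log\eta)$ in $O(1)$ space and ElGindy/Guibas cost $O(\eta)$ once $P'$ fits in memory, while the non-base cost is still dominated by the Har-Peled geodesic, namely $c_\textsc{HP}(\eta^2/\tau+\eta\log\tau\log^4(\eta/\tau))$, plus $O(\eta^2/\tau+\eta)$ for the at most $\eta/\tau$ invocations of the modified ray-shooting routine. Writing $n$ for the top-level size (a constant throughout the recursion) and using $\log\eta\le\log n$, I would prove by the induction of Section~\ref{sec:time} that
\[
T(\eta,\tau)\ \le\ c_R\!\left(\frac{\eta^2\log n}{\tau}+\eta\log\tau\,\log^{5}\frac{\eta}{\tau}\right),
\]
with the identical choices $\kappa=9/10$, $c_R=10c_\textsc{HP}$: the step $\sum_j\eta_j^2\le c\eta^2$, the absorption $c_\textsc{HP}+\tfrac{c}{\kappa}c_R\le c_R$, and the $\log^{5}$-absorption $c_\textsc{HP}-c_R\log(\kappa/c)\le 0$ go through verbatim, with the factor $\log n$ merely riding along on the quadratic terms (note that the Har-Peled term carries no extra $\log$, so it does not interfere). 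Evaluating at $\eta=n$, $\tau=s$ yields the claimed $O(n^2\log n/s+n\log s\,\log^{5}(n/s))$ expected running time.

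The main obstacle is the structural lemma of the second paragraph: everything else is a mechanical transfer from Section~\ref{sec:main}, but the correctness of ``forgetting $p$'' rests entirely on the funnel-apex argument, and one must also re-verify — with the new ray-shooting diagonal-finder replacing Lemma~\ref{lem:farAlternation} — that the pieces remain balanced and that the alternating-diagonal invariant still records precisely the portion of $\SPT(p)$ already reported, including the coordination of stems shared across sibling subpolygons.
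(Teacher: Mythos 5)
Your proposal follows the paper's own proof essentially step for step: the same swap of base cases (ElGindy/Guibas when the piece fits in memory, the $O(\eta^2\log\eta)$ constant-workspace procedure of Lemma~\ref{lem_o1wspace} otherwise), the same modified diagonal finder that extends the last geodesic edge $w_{\tau-1}w_\tau$ to the boundary so every cut belongs to a geodesic rooted at the source, the same key structural fact that $\SPT(v_1,P)\cap P'=\SPT(w,P')$ for the cut vertex $w$ closest to the source (argued via non-crossing of shortest paths, i.e., your funnel-apex argument), and the same recurrence yielding $O(n^2\log n/s+n\log s\log^{5}(n/s))$. The only deviations are cosmetic (your handling of a non-vertex source via carrying $p$ as a cut vertex versus the paper's split by segments to visible vertices, and your stated motivation for changing \textsc{FindAlternatingDiagonal}), and they do not change the substance of the argument.
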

\begin{proof}
In order to use the framework of Algorithm~\ref{algo_main}, we first ensure that $p$~is a vertex of the polygon.  If $p$ is already a vertex of $P$, we rename the vertices so that $p = v_1$. If $p$ lies in the interior of an edge of $P$, we look for a vertex~$q$ visible from $p$. The segment $pq$ splits $P$ into two subpolygons, and we run Algorithm~\ref{algo_main} on each subpolygon separately, renaming the vertices so that $p=v_1$. Although $pq$ appears in both shortest-path trees, we make sure it is only reported in one of the two subproblems. Finding a visible vertex $q$ can be done in linear time using a constant number of variables as explained in the proof of Lemma~\ref{lem:farAlternation}.  Finally, if $p$ lies in the interior of~$P$, we find two visible vertices $q,q'$, again using the approach of Lemma~\ref{lem:farAlternation}. The segments $pq$ and $pq'$ split $P$ into two subpolygons both of which have $p$ as a vertex. As in the boundary case, treat the two subpolygons independently to obtain the overall tree, with $pq$ and $pq'$ reported once.  In all cases, we introduce a constant number of modifications to the polygon, so they can be stored explicitly.

Now that $p=v_1$ is a vertex of $P$, we use the overall approach of Algorithm~\ref{algo_main}: compute the shortest path from $v_1$ to $\vmed$. Alternating diagonals found along the path are again used to generate subproblems,  which are solved recursively until we run out of space. At the bottom of the recursion we use a linear-time algorithm for computing $\SPT(p)$ (such as those of ElGindy~\cite{e-hdpa-85} or Guibas \etal~\cite{ghlst-ltavsppitsp-87}) or Lemma~\ref{lem_o1wspace}, depending on whether or not the remaining polygon fits in memory.

We must slightly modify the way the algorithm finds an alternating diagonal when it has performed $\tau$ steps: we need a diagonal that ensures that both subproblems are independent (in contrast to the triangulation problem, where any diagonal suffices). Instead, we simply extend the edge $w_{\tau-1}w_{\tau}$ until it meets the boundary of~$P$.  We declare this intersection point a virtual vertex $v$ (if it is not already a vertex) and use the segment $w_{\tau} v$ to split the polygon (see Figure~\ref{fig:ShortestPathtree}, right). Since $w_{\tau-1} w_{\tau}$ is an edge of the geodesic path, $v_1$ and $\vmed$ are on opposite sides of $w_{\tau} v$, thus $w_{\tau} v$ is an alternating diagonal.\footnote{Note that it is not properly a diagonal since there might not be a vertex at $v$, but by adding the virtual vertex we can treat it as one.  We should remember to ignore it when outputting shortest-path tree edges.}

In each subpolygon we want to compute the shortest-path tree to $v_1$ which may lie outside the current subpolygon.  Instead, we will show that in each subpolygon $P'$ there exists a vertex $w$ such that $\SPT(v_1,P)\cap P'=\SPT(w,P')$. 

Indeed, the boundary of $P'$ consists of a contiguous portion of the boundary of $P$ and up to $s$ diagonals.  Recall that in all cases these diagonals belong to the shortest path from $v_1$ to a boundary point of~$P$.  These diagonals form a contiguous portion $\pi'$ of a shortest path to~$v_1$. Let $w$ be the vertex of $\pi'$ closest to $v_1$. Let $q$ be any point in $P'$. Since two shortest paths to $v_1$ cannot cross, we conclude that the shortest path from $q$ to $v_1$ cannot properly intersect~$\pi'$. Thus, after it intersects with it, it must follow the same path towards~$v_1$. In particular, it must also pass through $w$, which implies $\SPT(v_1,P)\cap P'=\SPT(w,P')$, as claimed.

That is, when processing a small subpolygon, we can forget about $v_1$ and compute the shortest-path tree to $w$, giving the same structure as in the original problem. 
By doing so, we ensure that the recursively split polygons have the same structure as in Algorithm~\ref{algo_main}: a chain of contiguous input vertices and a (small) number of cut vertices stored in memory. 

The analysis of space use is identical to that of Algorithm~\ref{algo_main}. We now turn to the running time bound.  We claim that, for a suitably chosen constant $c$, it obeys the recurrence

\[
  T(\eta,\tau) = c \left(\eta^2\log \frac{\eta}{\tau} + \eta \log \tau \log^4 \frac{\eta}{\tau} + \sum_i T(\eta_i,\tau \kappa)\right),
\]
which differs from the recurrence in Section~\ref{sec:time} in that the constant-space running time algorithm of Lemma~\ref{lem_o1wspace} is slower than its counterpart in Algorithm~\ref{algo_main} by a $\log(\eta/\tau)$ factor.  By an entirely analogous analysis, the recurrence solves to

\[
  T(\eta,\tau) = O\!\left(\frac{\eta^2 \log \eta}{\tau}+ \eta\log \tau \log^{5} \frac{\eta}{\tau}\right),
\]
concluding the proof of the theorem.
\end{proof}

\subsection{Partitioning $P$ into subpolygons of the same size}
\label{sec:subpolygons}
Asano \etal~\cite{abbkmrs-mcasp-11} observed that one can use a triangulation algorithm to partition a polygon into pieces of any desired size. Specifically, they showed that in any simple polygon there always exist $\Theta(s)$ non-crossing diagonals that split it into subpolygons with $\Theta(n/s)$ vertices each. 

The existence was proven for any value of $s$ and the proof is constructive. However, since no time-space trade-off for triangulating polygons was known at that time, their algorithm would always run in quadratic time regardless of the size of the workspace (see Theorem~5.2 of~\cite{abbkmrs-mcasp-11}). We can now extend this result to obtain a proper time-space trade-off.

\begin{theorem}\label{theo_partiti}
Let $P$ be a simple polygon with $n$ vertices. For any $s\leq n$, we can partition~$P$ with $\Theta(s)$ non-crossing diagonals, so that each resulting subpolygon contains $\Theta(n/s)$ vertices. This partition can be computed in $O(n^2/s+n\log s\log^{6} (n/s))$ expected time using $O(s)$ variables. 
\end{theorem}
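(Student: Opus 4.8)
The plan is to reduce Theorem~\ref{theo_partiti} to the framework of Algorithm~\ref{algo_main}, exactly as suggested in Section~\ref{sec:extensions}: we run the same recursive partitioning procedure, using the geodesic between $v_1$ and $\vmed$ to split $P$ by alternating diagonals, recursing on the pieces, and stopping the recursion once a subpolygon fits in memory. The only genuinely new ingredient is what to do at the base case, since instead of triangulating a small subpolygon we must split it into pieces of size $\Theta(n/s)$. I would invoke the constructive existence result of Asano~\etal~\cite{abbkmrs-mcasp-11} (their Theorem~5.2): given a subpolygon $P'$ that fits in the current workspace (so $\Theta(\tau)$ vertices, with $\tau$ a constant fraction of~$n$ at the bottom level), triangulate $P'$ in linear time via Chazelle~\cite{c-tsplt-91} and then, from that triangulation, extract $\Theta(\tau s / n)$ non-crossing diagonals cutting $P'$ into pieces of $\Theta(n/s)$ vertices each. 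Summing over all base-case subpolygons, whose total size is $\Theta(n)$, yields $\Theta(s)$ diagonals overall and pieces of the desired size.

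The key steps, in order, are: (i) Set up the recursion precisely as in Algorithm~\ref{algo_main}, with the same choice of $v_1$ at each level (the first or last cut vertex), so that every subpolygon is described by a contiguous chain of input vertices plus $O(\tau)$ cut vertices, preserving constant-time random access; this is verbatim the argument in Sections~\ref{sec:main}'s correctness subsection, and it needs no modification since only the base case changes. (ii) Verify that the diagonals we are about to report --- both the alternating diagonals used to split $P$ in the recursion and the diagonals produced at the base case inside each small $P'$ --- are pairwise non-crossing: the recursion diagonals are edges of (portions of) the single geodesic $\pi$ at each level, hence non-crossing, diagonals from different recursive branches lie in disjoint subpolygons, and the base-case diagonals lie strictly inside a single $P'$; the one subtlety is coordinating the diagonals that bound a base-case $P'$ (they are alternating diagonals already counted) so they are reported exactly once, handled exactly as the "considerations on the output" discussion. (iii) Count: the recursion produces $O(\log n)$ levels, each contributing $O(n/\tau_i)$ alternating diagonals at level~$i$ where $\tau_i = s\kappa^i$, which is a geometric-type sum dominated by the deepest level and gives $O(n/s_{\min})$ where $s_{\min}=\Theta(s/\log s)$ --- so one must be slightly careful here, but since the base case already produces $\Theta(s)$ diagonals and the number of recursion diagonals per level is $O(\eta/\tau)$ summed against $\sum\eta_j=\eta$, the total stays $\Theta(s)$; I would state this as the main bookkeeping lemma. (iv) Bound the running time via the same recurrence as Section~\ref{sec:time}, noting that the base-case work is $O(\eta^2/\tau + \eta)$ (Chazelle plus a linear scan to extract diagonals) --- identical to the triangulation base case --- so the recurrence and its solution $O(n^2/s + n\log s\log^5(n/s))$ carry over unchanged; the extra $\log(n/s)$ factor in the claimed bound $O(n^2/s+n\log s\log^6(n/s))$ must come from somewhere, presumably from a deterministic variant of the ray-shooting/selection step (as in Lemma~\ref{lem:farAlternation} or its derandomization in~\cite{bkls-cvpufv-13}), or from the cost of locating diagonal endpoints consistently across the two subpolygons sharing a diagonal; I would track this term explicitly. (v) Space: identical to Section~\ref{sec:space}, a geometric sum $O(s)(1+\kappa+\kappa^2+\cdots)=O(s)$.

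The main obstacle I anticipate is not the algorithmics but the accounting for the \emph{number} and \emph{sizes} of the output pieces. Two issues need care. First, the base-case subpolygons do not all have exactly the same size --- they range over $[\Theta(n/s), \Theta(n)]$ depending on where the recursion bottomed out --- so one must argue that applying the Asano~\etal\ construction inside each, with a per-subpolygon target of $\Theta(n/s)$, yields pieces that are uniformly $\Theta(n/s)$ and whose total count telescopes to $\Theta(s)$; this requires that $n/s$ be $\Omega(1)$, i.e. $s\le n$, which is exactly the hypothesis, and that the alternating diagonals bounding a base-case $P'$ not create artificially tiny pieces --- which one handles by merging a too-small boundary sliver into a neighbor, a standard fix. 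Second, one must confirm that the diagonals reported by the recursion (the $a_c$'s) together with the base-case diagonals indeed form a valid partition of $P$ with no diagonal crossing another and no vertex of $P$ isolated; this follows because the whole diagonal set is a subset of a triangulation-like planar subdivision of $P$, but writing it cleanly is where the proof will spend its effort. Everything else --- recursion depth, space, the time recurrence --- is a direct transcription of the proofs already given for Theorem~\ref{main_theo} and the shortest-path-tree theorem, with the base-case subroutine swapped out.
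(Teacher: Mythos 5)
Your proposal follows the route the paper only alludes to (``one can modify Algorithm~\ref{algo_main}\dots'') but does not carry out; the proof actually given in the paper is different and simpler: it keeps Algorithm~\ref{algo_main} untouched and uses it as a black box (Theorem~\ref{main_theo}) inside the iterative scheme of Theorem~5.2 of Asano~\etal~\cite{abbkmrs-mcasp-11} --- repeatedly triangulate every current subpolygon with more than $\lceil n/s\rceil$ vertices, take a balanced cut from that triangulation, and split; after $O(\log (n/s))$ rounds all pieces have the right size, and the extra $\log(n/s)$ factor in the stated bound is exactly the number of rounds multiplied into the $n\log n\log^{5}(n/s)$ term (not, as you guessed, a derandomization or bookkeeping cost).

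More importantly, your version has a genuine gap in the accounting step you yourself flag as the crux. The recursion of Algorithm~\ref{algo_main} does not only cut along the $O(\eta/\tau)$ alternating diagonals per level: every edge $w_jw_{j+1}$ of the geodesic $\pi$ that is not a polygon edge is also a cut bounding some subproblem $Q_j$, and $\pi$ can have $\Theta(n)$ such edges already at the top level. So the set of ``recursion diagonals'' can have size $\Theta(n)$, not $O(s)$, and the resulting base-case subpolygons ($Q_j$'s, and also $R$) can be arbitrarily small --- down to triangles or degenerate segments. Hence ``report the recursion diagonals plus $\Theta(n's/n)$ diagonals inside each base-case piece'' violates both requirements of the theorem: too many diagonals and pieces far below $\Theta(n/s)$. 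Your proposed repair, merging a too-small piece into a neighbor, does not go through in this structure: consecutive small pieces $Q_j$ and $Q_{j+1}$ share only the single geodesic vertex $w_{j+1}$ (their union is pinched, not a subpolygon bounded by one diagonal), so merging requires retroactively suppressing already-committed cuts and re-forming regions across $R$, which conflicts with the write-only output and is precisely the bookkeeping the paper avoids by taking the triangulate-and-balanced-cut route. Your step~(iii) count ($O(\eta/\tau)$ diagonals per level) is therefore also incorrect as stated. To salvage your approach you would need a mechanism that decides, before reporting, which geodesic edges become partition diagonals and which are only internal to a future $\Theta(n/s)$-size piece --- essentially re-proving the balanced-cut argument inside the streaming recursion --- and none of that is supplied.
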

\begin{proof}
Just as for the shortest-path tree computation, one can modify Algorithm~\ref{algo_main} to partition a polygon into pieces for the entire range of available working space memory values. Alternatively, we can also do it by combining Theorem 5.2 of~\cite{abbkmrs-mcasp-11} with our triangulation algorithm (Theorem~\ref{main_theo}). Below we sketch a proof of the latter approach, for completeness; we omit some of the bookkeeping details; refer to \cite{abbkmrs-mcasp-11} for the specifics.

The algorithm makes several scans of the input. At each step we keep a partition of~$P$ into subpolygons $\mathcal{P}=\{P_1, \ldots, P_k\}$; initially $k=1$ and $P_1=P$. Let $t := \lceil n/s \rceil$, our aim is to iteratively cut the polygons of $\mathcal{P}$ into smaller pieces until they have between $t$ and $t/6$ vertices each. 

In each round, we scan each polygon~$P_i$. The ones with more than $t$ vertices are triangulated. For each edge of the triangulation, we check if it would create a balanced cut (i.e., a diagonal of a polygon of $n$ vertices makes a \emph{balanced cut} if neither component has fewer than $n/6$ vertices). 
It is known that such a cut always exists in any triangulation.  Once found, we use it to split the current polygon into two. After the $i$th round we have split~$P$ into subpolygons such that each either has the desired size or has at most $(5/6)^i n$ vertices. In each round we triangulate each subpolygon at most once. Moreover, each subpolygon is triangulated independently, so we can bound the running time of the $i$th round by

\begin{align*}
  \sum _j \left(\frac{n_j^2}{s}+n_j\log n\log^{5}
  \frac{n_j}{s}
  \right)
& = 
  \sum _j \frac{n_j^2}{s} +  \log n\log^{5}
  \frac{n}{s}
  \sum_{j} n_j\\
&\leq
  \left(\frac{5}{6}\right)^{i-1} n \sum _j\frac{n_j}{s}+ n \log n\log^{5}
  \frac{n}{s} \\
&\leq
  \left(\frac{5}{6}\right)^{i-1} \cdot 3 (n^2/s) +n\log n\log^{5} \frac{n}{s},
\end{align*}
where we have used the fact that in each round the subpolygon sizes add up to at most~$3n$.
Summing over all passes and observing that the number of passes is at worst logarithmic in $n/s$, we conclude that the running time of this algorithm is
$O(n^2/s +n\log n\log^{6} (n/s)).$

Regarding space, each triangulation algorithm we invoke uses $O(s)$ space. Since each execution is independent we can reuse the space each time. In addition to that we need to explicitly store the list $\mathcal{P}$. Since all polygons of $\mathcal{P}$ have at least $t/6 \in \Theta(n/s)$ vertices, we  never maintain more than $O(s)$ such subpolygons. Thus, the space bounds are also preserved. 
\end{proof}

We note that the shortest-path algorithm of Har-Peled~\cite{Har-Peled15} also partitions $P$ into $O(s)$ pieces (of size $O(n/s)$ each) as part of his preprocessing. However, this is done by introducing Steiner points. Our approach can report the partition implicitly (by giving the indices of the diagonals) and avoids the need for Steiner points.

\section{Acknowledgments}
The authors would like to thank Jean-Fran\c{c}ois Baffier, Man-Kwun Chiu, and Takeshi Tokuyama for valuable discussions that preceded the creation of this paper. 
Moreover, we would like to thank Wolfgang Mulzer for pointing out a critical flaw in a preliminary version of the paper, as well as for his help in correcting it.

\bibliography{triang}

\end{document}